\begin{document}

	\author{
	Chao Li, Gerome Miklau\\
   	University of Massachusetts Amherst, Massachusetts, USA\\
   	\{chaoli, miklau\}@cs.umass.edu}

	
\title{Efficient Batch Query Answering Under Differential Privacy}
\maketitle{}

\

\abovedisplayskip = 3pt
\belowdisplayskip = 3pt
\subfigcapskip=-5pt


\floatname{algorithm}{Program}

\newcommand{\reals}{R}
\newcommand{\vol}{\textup{Vol}}
\newcommand{\convex}{\textup{Convex}}

\newcommand{\minerror}{\mbox{\sc MinError}}
\newcommand{\minsens}{\mbox{\sc MinSensitivity}}
\newcommand{\allrange}{\mbox{\sc AllRange}}
\newcommand{\allpred}{\mbox{\sc AllPredicate}}

\newcommand{\eqbydef}{\stackrel{\mathrm{def}}{=}}

\newcommand{\vect}[1]{\mathbf{#1}}
\newcommand{\sens}[1]{\Delta_{#1}}
\newcommand{\inv}[1]{{#1}^{-1}}
\newcommand{\ep}[1]{\inv{({#1}^t{#1})}}

\def\alg{\mathcal{K}}  
\def\LM{\mathcal{L}}	
\def\GM{\mathcal{G}}	
\def\MM{\mathcal{M}}	

\def\lbl{\mbox{LSA}}
\def\svdb{\mbox{\sc svdb}}
\def\cols{\mbox{cols}}

\def\tr{\mbox{trace}}
\def\var{\mbox{Var}}
\newcommand{\error}[2]{\mbox{\sc Error}_{#1}( #2 )}
\newcommand{\totalerror}[2]{\mbox{\sc TotalError}_{#1}( #2 )}
\newcommand{\maxerror}[2]{\mbox{\sc MaxError}_{#1}( #2 )}

\def\aa{\mathbb{A}}  
\def\bb{\mathbb{B}}  

\def\plus{{\!+}}
\def\b{\vect{\tilde{b}}}  
\def\x{\vect{x}}  
\def\estx{\vect{\hat x}}
\def\y{\vect{y}}
\def\q{\vect{q}}  
\def\w{\vect{w}} 
\def\v{\vect{v}}  
\def\estw{\vect{\hat w}}
\def\estq{\vect{\hat q}}
\def\A{\vect{A}}
\def\B{\vect{B}}
\def\Q{\vect{Q}}
\def\W{\vect{W}}
\def\M{\vect{M}}
\def\D{\vect{D}}
\def\P{\vect{P}}
\def\p{\vect{p}}
\def\I{\vect{I}}
\def\V{\vect{V}}
\def\H{\vect{H}}
\def\G{\vect{G}}
\def\R{\vect{R}}
\def\X{\vect{X}}
\def\Wav{\vect{Y}}
\def\lambdaB{\vect{\lambda}}
\def\LambdaB{\vect{\Lambda}}

\def\PM{\P_{\M}}
\def\DM{\D_{\M}}
\def\DS{\D_s}
\def\DSinv{\DS^{-1}}


\def\WW{\W}		
\def\Wbool{\W_{01}}		
\def\Wrang{\W_{R}}		
\def\Wunit{\W_{unit}}		

\def\real{\mathbb{R}}

\def\RR{\vect{R}}

\newcommand{\ff}[1]{#1}
\newcommand{\dif}[1]{\mathbf{\delta}_{#1}}


\newcommand{\cl}[1]{[[\emph{\color{blue}CL: #1}]]}
\newcommand{\mh}[1]{}
\newcommand{\gm}[1]{[[\emph{\color{red}GM: #1}]]}
\newcommand{\eat}[1]{}
\newcommand{\cut}[1]{}

\newcommand{\set}[1]{\{#1\}}   

\newtheorem{definition}{Definition}[section]
\newtheorem{proposition}{Proposition}[section]
\newtheorem{corollary}{Corollary}[section]
\newtheorem{conjecture}{Conjecture}[section]
\newtheorem{theorem}{Theorem}[section]
\newtheorem{problem}{Problem}[section]
\newtheorem{example}{Example}[section]
\newtheorem{remark}{Remark}[section]

\def\nbrs{nbrs}
\def\<{\langle}
\def\>{\rangle}

\def\qq{\tilde{q}}
\def\qbar{\overline{q}}

\def\Q{\mathbf{Q}}
\def\QQ{\mathbf{\tilde{Q}}}
\def\QC{\mathbf{\overline{Q}}}
\def\qq{\tilde{q}}
\def\qbar{\overline{q}}

\def\H{\mathbf{H}}
\def\HH{\mathbf{\tilde{H}}}
\def\HC{\mathbf{\overline{H}}}
\def\hh{\tilde{h}}
\def\hbar{\overline{h}}

\def\Lap{\mbox{Laplace}}
\def\Nor{\mbox{Normal}}
\def\cnt{c}
\def\cons{\gamma}
\def\db{I}

\def\hght{{\ell}}  
\def\hv{\hght(v)}
\def\wt{\alpha}
\def\root{r}

\newcommand{\frob}[1]{||#1||_f}
\newcommand{\E}{\mathbb{E}}
\newcommand{\Ldist}[3]{||#1 -#2||_{#3}}
\newcommand{\rank}{\textup{rank}}
\newcommand{\trace}{\textup{Trace}}

\newcommand{\Ltwo}[1]{||#1||_2}
\newcommand{\Lone}[1]{\left\Vert #1  \right\Vert_1}

\newcommand{\reffull}[1]{#1}

\newtheorem{lemma}{Lemma}

\newcommand{\one}[1]{\mathbb{I}_{#1}}
\def\U{\mathcal U}
\def\Z{succZ} 
\def\s{s}
\def\m{M}
\def\mm{\tilde{M}}

\vspace{-8ex}
\begin{abstract}
Differential privacy is a rigorous privacy condition achieved by randomizing query answers.  This paper develops efficient algorithms for answering multiple queries under differential privacy with low error.  We pursue this goal by advancing a recent approach called the {\em matrix mechanism}, which generalizes standard differentially private mechanisms.  This new mechanism works by first answering a different set of queries (a strategy) and then inferring the answers to the desired workload of queries.  Although a few strategies are known to work well on specific workloads, finding the strategy which minimizes error on an arbitrary workload is intractable.  We prove a new lower bound on the optimal error of this mechanism, and we propose an efficient algorithm that approaches this bound for a wide range of workloads.
\end{abstract}

\section{Introduction} \label{sec:intro}

Differential privacy \cite{Dwork:2006Calibrating-Noise} is a rigorous privacy condition, guaranteeing participants that the information released about the data will be virtually indistinguishable whether or not their personal data is included.  Differential privacy is achieved by randomizing query answers.  While there are a number of general-purpose mechanisms for satisfying differential privacy \cite{Dwork:2011A-firm-foundation}, there are comparatively few results showing that these mechanisms are optimally accurate---that is, that the least possible distortion has been introduced to satisfy the privacy criterion.  For a single numerical query, the addition of appropriately-scaled Laplace noise satisfies $\epsilon$-differential privacy and has been proven optimally accurate \cite{ghosh2009universally}.  For workloads of multiple queries, optimally accurate mechanisms are not known.  

Our focus is on batch query answering, in which multiple queries are answered at one time, in a single interaction with the private server.  A batch of queries, or a workload, here consists of a set of linear counting queries.  These include predicate counting queries, histograms, sets of marginals, data cubes, or any combination of these.  

The goal of research in this area is to devise an efficient algorithm that can achieve the least possible error under differential privacy.  In this work we pursue this goal by advancing a recently-proposed technique called the matrix mechanism \cite{Li:2010Optimizing-Linear}.  The matrix mechanism generalizes standard differentially private output perturbation techniques, and we explain it by comparing it with the standard Laplace mechanism. 

The Laplace mechanism answers a workload of queries by adding to each query a sample chosen independently at random from a Laplace distribution.  The distribution is scaled to the sensitivity of the workload (the maximum possible change to the query answers induced by the addition or removal of one tuple).  Consider using the Laplace mechanism to simultaneously release answers to the set of all range-count queries over a database containing ages for a community.  This workload consists of all queries $AgeCount(a,b)$ which return the number of individuals whose age is between $a$ and $b$, for any constants $a,b \in\{1,\dots 120\}$.

Using the Laplace mechanism directly results in extremely noisy query answers for this workload because the noise added to {\em each} query in the workload is proportional to the sensitivity of the workload.  The sensitivity of this workload is $O(n^2)$ where $n$ is the size of the domain (120 in this example).  In addition, because independent noise is added to each query, the answers are inconsistent: e.g. the sum of $AgeCount(20,40)$ and $AgeCount(41,60)$ will not, in general, equal $AgeCount(20,60)$ as one might hope. 

An alternative to the direct application of the Laplace mechanism arises from recognizing that the workload of all range-count queries can be computed from a smaller set of query answers---namely counts for each individual age $1, 2, \dots 120$.  Because each count is independent, this approach has very low sensitivity and does not introduce inconsistency.  However, the released query results must be summed to get the answers to the desired range queries: e.g. $AgeCount(1,10)$ is the sum $AgeCount(1,1) + \dots + AgeCount(10,10)$.  Although each individual count has low error, the expected error accumulates when computing the sum, leading to significant error for large ranges.

The matrix mechanism (so named because workloads are represented as matrices and analyzed algebraically) can be seen as a generalization of the multi-query Laplace mechanism.  The approaches above can be seen as two extremes: one in which the workload is submitted to the Laplace mechanism, and one in which the workload is divided into independent queries and those are submitted.  The matrix mechanism encompasses both of these extremes, along with many other approaches, some offering significantly lower error.  

Given a workload of queries, the matrix mechanism uses the Laplace mechanism to answer a different set of queries, called a {\em strategy}. The answers to the strategy queries are then used to derive the query answers ultimately desired---the workload queries.  If there are related queries in the strategy, linear regression is used to combine the evidence from all available query answers (in an optimal way) to produce the final answers.  The result is a consistent final answer to the workload queries, often with improved error.
The matrix mechanism can improve error because the strategy queries can be used to avoid or reduce redundancy that may be present in the workload, thus lowering sensitivity.  Also, redundancy that does exist in the strategy queries is exploited by the linear regression process to improve accuracy. 

Using the matrix mechanism requires instantiating it with a set of strategy queries which is a good match for the workload.  For the workload of all range queries, two approaches were independently proposed recently, one based on a wavelet transform \cite{xiao2010differential}, and one based on a hierarchical query set \cite{Hay:2010Boosting-the-Accuracy}.  While these approaches look quite different at first glance, they are in fact different sets of strategy queries which can be analyzed as instances of the matrix mechanism \cite{Li:2010Optimizing-Linear}.  Both strategies result in much lower error: $O(log^3n)$ instead of $O(n^4)$ for the workload itself or $O(n)$ for the approach that asks for individual counts.

Thus research to date has shown that for a particular workload (the set of all range-count queries) there are strategies that offer significantly lower error.   But these strategies do not work well for all workloads.  To exploit the full power of the matrix mechanism, we must customize strategies to the given workload.  

Unfortunately, we cannot hope for exact solutions to this problem.  The {\em strategy design problem}, i.e., calculating the strategy that results in the minimal error for a given workload, is intractable~\cite{Li:2010Optimizing-Linear}.  Nevertheless, in this paper we provide efficient algorithms for computing a set of strategy queries for a workload and show that they approach optimally accurate strategies.

A key aspect of our approach is the move from standard $\epsilon$-differential privacy, to $(\epsilon,\delta)$-differential privacy, a modest relaxation of the privacy condition often called {\em approximate} differential privacy.  The matrix mechanism is easily adapted to approximate differential privacy.  Computing the strategy with minimal error remains computationally infeasible.  But we show that under this definition the matrix mechanism has a number of nice features which make it amenable to analysis and which lead to better approximate solutions.

Our contributions are organized as follows.  The central theoretical result, shown in Sec. 3, is a tight lower bound on the minimum total error achievable for a workload.  It can be efficiently computed from the spectral properties of a workload when it is represented in matrix form.  In Sec. 4, we propose an efficient localized search algorithm for solving the strategy design problem.  In Sec 5, we show experimentally that, for a variety of workloads, our algorithm produces strategies that approach the optimal.

We use our bound on error to understand {\em workload} {\em complex\-ity}---that is, the relative difficulty of simultaneously answering a workload of queries accurately.  For instance, sets of multi-dimensional range queries are ``easier'' to answer than one dimensional range queries.  We also use the error bound to evaluate the quality of existing approaches.  The hierarchical and wavelet strategies mentioned above are fairly close to optimal in some cases, but can be significantly improved in others.  For example, for two-dimensional range queries, the wavelet strategy results in error $2.2$ times the optimal while our algorithm finds a strategy that is just $10\%$ greater than the optimal. 
	Ultimately, we conclude that adapting the matrix mechanism to the specific properties of a workload is crucial: there are workloads for which the total error achieved using our algorithm is an order of magnitude less than that of existing techniques.



\section{Background}

In this section we describe our data model and privacy conditions. We also review the fundamentals of the matrix mechanism, including error measurement and our main problem of strategy design. 

\subsection{Data model and linear queries}
The database $I$ is an instance of a single-relation schema $R(\mathbb{A})$, with attributes $\mathbb{A}=\{A_1, A_2, \ldots, A_m\}$.  The crossproduct of the attribute domains, written $dom(\mathbb{A})$, is the set of all possible tuples.  In order to express our queries, we first construct from $I$ a vector $\x$ of {\em cell counts}.  Each element $x_i$ of $\x$ is associated with $cell(x_i)$, a disjoint subset of $dom(\mathbb{A})$.  Then $x_i$ is the count of the tuples from $cell(x_i)$ that are present in $I$: $x_i = |\{t \in I | t \in cell(x_i)\}|$.  All queries are expressed using the cell counts in $\x$.  We always use $n$ for the size of $\x$, which we sometimes refer to simply as the domain size.

One way to define the vector $\x$ is to choose the smallest possible cells: one cell for each tuple in $dom(\mathbb{A})$.  This is often inefficient (the size of the $\x$ vector is the product of the attribute domain sizes) and ineffective (the base counts are typically too small to be estimated accurately under the privacy condition).  Instead, we often consider queries over larger cells.  A common way to form a vector of base counts is to partition each $dom(A_i)$ into $d_i$ regions, which could correspond to ranges over an ordered domain, or individual elements (or sets of elements) in a categorical domain.  Then the individual cells are defined by taking the cross-product of the regions in each domain.  Other alternatives are possible as the cells need not be formed from contiguous subsets of the attribute domains as long as all cells are disjoint.  Constructing a vector of base counts appropriate to a workload is usually straightforward but we provide a practical example in App. \ref{app:datamodel} to aid the reader.  

A linear query computes a specified linear combination of the cell counts in $\x$.
\begin{definition}[Linear query]
A {\em linear query} is a length-$n$ row vector $\q=[q_1 \dots q_n]$ with each $q_i \in \mathbb{R}$.  
The answer to a linear query $\q$ on $\x$ is the vector product $\q\x = q_1x_1 + \dots + q_nx_n$.
\end{definition}


A set of queries is represented as a matrix, each row of which is a single linear query.

\begin{definition}[Query matrix]
A {\em query matrix} is a collection of $m$ linear queries, arranged by rows to form an $m \times n$ matrix.
\end{definition}

If $\W$ is an $m \times n$ query matrix, the query answer for $\W$ is a length $m$ column vector of query results, which can be computed as the matrix product $\W \x$.

A {\em workload} is a query matrix representing a batch of linear queries of  interest to a user.  
We introduce notation for two common workloads which contain all queries of a certain type.  $\allrange(d_1, \dots d_k)$ refers to the set of all multi-dimensional range-count queries over $k$ ordered domains, where each is divided into $d_i$ regions.  Here the size of the cell count vector, $n$, is the product $\prod_{i=1}^k{d_i}$.  {\sc All\-Pred\-icate}$(n)$ is the set of all predicate counting queries over $n$ cells.  $\allpred(n)$ contains all $2^n$ linear queries of size $n$ with coefficients of 0 or 1.  We also consider workloads consisting of arbitrary subsets of each of these types of queries, low-order marginals, and their combinations.

\subsection{Privacy definitions and mechanisms}

Standard $\epsilon$-differential privacy \cite{Dwork:2006Calibrating-Noise} places a bound (controlled by $\epsilon$) on the difference in the probability of query answers for any two {\em neighboring} databases.  For database instance $\db$, we denote by $\nbrs(\db)$ the set of databases differing from $\db$ in at most one record. 
Approximate differential privacy~\cite{Dwork:2006Our-Data-Ourselves:,McSherry:2009fk}, is a modest relaxation in which the $\epsilon$ bound on query answer probabilities may be violated with small probability (controlled by $\delta$).

\begin{definition}[Approximate Differential Privacy] A randomized algorithm $\alg$ is $(\epsilon,\delta)$-differentially private if for any instance $I$, any $I' \in \nbrs(I)$, and any subset of outputs $S \subseteq Range(\alg)$, the following holds:
\[
Pr[ \alg(I) \in S] \leq \exp(\epsilon) \times Pr[ \alg(I') \in S] +\delta
\]		
	\end{definition}

Both definitions can be satisfied by adding random noise to query answers.  The magnitude of the required noise is determined by the {\em sensitivity} of a set of queries: the maximum change in a vector of query answers over any two neighboring databases.  However, the two privacy definitions differ in the measurement of sensitivity and in their noise distributions.  Standard differential privacy can be achieved by adding Laplace noise calibrated to the $L_1$ sensitivity of the queries \cite{Dwork:2006Calibrating-Noise}. Approximate differential privacy can be achieved by adding Gaussian noise calibrated to the $L_2$ sensitivity of the queries \cite{Dwork:2006Our-Data-Ourselves:,McSherry:2009fk}.  This small difference in the sensitivity metric---from $L_1$ to $L_2$---has important consequences for our algorithms and our theoretical results.  Our results focus on approximate differential privacy, but App. \ref{app:l1l2} contains a thorough comparison of these two definitions as they pertain to the matrix mechanism and the results of this paper.


Since our query workloads are represented as matrices, we express the sensitivity of a workload as a matrix norm.  Because neighboring databases $\db$ and $\db'$ differ in exactly one tuple, and because cells are disjoint, it follows that the corresponding vectors $\x$ and $\x'$ differ in exactly one component, by exactly one, in which case we write  $\x' \in \nbrs(\x)$.  The $L_2$ sensitivity of $\W$ is equal to the maximum $L_2$ norm of the columns of $\W$. Below, $\cols(\W)$ is the set of column vectors $W_i$ of $\W$.

\begin{proposition}[$L_2$ Query matrix sensitivity]
The $L_2$ sensitivity of a query matrix $\W$ is denoted $\Ltwo{\W}$, defined as follows:
\begin{eqnarray*}
\Ltwo{\W} & \eqbydef & \max_{\x' \in \nbrs(\x)} \Ltwo{\W\x - \W\x'} 
			= \max_{W_i \in \cols(\W)} \Ltwo{W_i}
\end{eqnarray*}
\end{proposition}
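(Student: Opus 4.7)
The plan is to unpack the definition of neighboring vectors and reduce the difference $\W\x - \W\x'$ to a single column of $\W$, at which point the identity follows immediately.

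First I would recall from the paragraph just above the proposition that $\x' \in \nbrs(\x)$ means $\x$ and $\x'$ agree on every component except one, where they differ by exactly one. Concretely this means $\x' - \x = \pm e_i$ for some standard basis vector $e_i \in \mathbb{R}^n$, where $i$ ranges over the coordinates of $\x$. This reduction is the key observation --- everything else is linear algebra.

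Next I would use linearity of matrix multiplication to write
\[
\W\x' - \W\x \;=\; \W(\x' - \x) \;=\; \pm\, \W e_i \;=\; \pm\, W_i,
\]
where $W_i$ is the $i$-th column of $\W$. Since the $L_2$ norm is invariant under sign, $\Ltwo{\W\x - \W\x'} = \Ltwo{W_i}$. Taking the maximum over all neighboring $\x'$ corresponds precisely to taking the maximum over all choices of column index $i$, yielding
\[
\max_{\x' \in \nbrs(\x)} \Ltwo{\W\x - \W\x'} \;=\; \max_{W_i \in \cols(\W)} \Ltwo{W_i},
\]
which is the claimed equality.

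There is no real obstacle here: the proposition is essentially a restatement of the definition of sensitivity in matrix form, and the entire content is the observation that neighboring $\x$ vectors differ by a signed unit basis vector. The only care needed is to make explicit that the cell structure (disjoint cells partitioning $dom(\mathbb{A})$) is what guarantees the single-coordinate, unit-magnitude change in $\x$; this was already argued in the paragraph preceding the proposition, so I would simply cite it rather than re-derive it.
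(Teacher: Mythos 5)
Your proof is correct and follows exactly the reasoning the paper itself relies on: the paragraph preceding the proposition establishes that neighboring count vectors differ by a signed unit basis vector, from which $\W\x - \W\x' = \pm W_i$ and the max-column-norm identity follow immediately. The paper treats this as self-evident and gives no separate proof, so your write-up simply makes the same argument explicit.
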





The classic differentially private mechanism adds independent noise calibrated to the sensitivity of a query workload.  We use $\Nor(\sigma)^m$ to denote a column vector consisting of $m$ independent samples drawn from a Gaussian distribution with mean $0$ and scale $\sigma$.

\begin{proposition}{\sc (Gaussian mechanism \cite{Dwork:2006Our-Data-Ourselves:, McSherry:2009fk})}\label{thm:l2diffpriv}
Given an $m \times n$ query matrix $\W$, the randomized algorithm $\GM$ that outputs the following vector is $(\epsilon,\delta)$-differentially private:
$$\GM(\W,\x) = \W\x + \Nor(\sigma)^m$$ 
where $\sigma=\Ltwo{\W}\sqrt{2\ln(2/\delta)}/\epsilon$
\end{proposition}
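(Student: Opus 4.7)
The plan is to follow the standard privacy-loss-random-variable argument, specialized to the multivariate spherical Gaussian. Fix neighboring inputs $\x, \x'$ and let $\Delta = \W\x - \W\x'$. Because the sensitivity $\Ltwo{\W}$ equals the maximum column norm of $\W$ (the stated proposition) and $\x, \x'$ differ in exactly one coordinate by exactly one, we have $\Ltwo{\Delta} \le \Ltwo{\W}$. The output $\y = \W\x + \Nor(\sigma)^m$ has density $p_\x(\y) \propto \exp(-\Ltwo{\y - \W\x}^2/(2\sigma^2))$, with an analogous expression for $p_{\x'}$, so the log-density ratio simplifies via the parallelogram identity to
\[
L(\y) \;=\; \ln\frac{p_\x(\y)}{p_{\x'}(\y)} \;=\; \frac{1}{2\sigma^2}\bigl(\Ltwo{\y-\W\x'}^2 - \Ltwo{\y-\W\x}^2\bigr) \;=\; \frac{\langle \y-\W\x,\Delta\rangle}{\sigma^2} + \frac{\Ltwo{\Delta}^2}{2\sigma^2}.
\]

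Next, I would invoke the standard reduction from privacy loss to $(\epsilon,\delta)$-differential privacy: if $\Pr_{\y \sim \GM(\W,\x)}[L(\y) > \epsilon] \le \delta$, then for every measurable $S$,
\[
\Pr[\GM(\W,\x)\in S] \;\le\; e^\epsilon \Pr[\GM(\W,\x')\in S] + \delta,
\]
because the contribution from $\{\y : L(\y) \le \epsilon\}$ is controlled pointwise by $e^\epsilon$, while the complement has mass at most $\delta$ under $\GM(\W,\x)$. So the proof reduces to bounding the tail of $L$.

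Under $\y \sim \GM(\W,\x)$, the vector $\y - \W\x$ is $\Nor(\sigma)^m$, so $\langle \y-\W\x,\Delta\rangle$ is a one-dimensional Gaussian with mean $0$ and variance $\sigma^2 \Ltwo{\Delta}^2$. Consequently $L(\y)$ is itself Gaussian with mean $\Ltwo{\Delta}^2/(2\sigma^2)$ and variance $\Ltwo{\Delta}^2/\sigma^2$. Writing $s = \Ltwo{\Delta} \le \Ltwo{\W}$, the event $\{L(\y) > \epsilon\}$ becomes $\{Z > \sigma\epsilon/s - s/(2\sigma)\}$ for a standard normal $Z$. Applying the Gaussian tail bound $\Pr[Z > t] \le \tfrac{1}{2}\exp(-t^2/2)$ and plugging in the prescribed $\sigma = \Ltwo{\W}\sqrt{2\ln(2/\delta)}/\epsilon$ yields, after a short computation, a tail bound of at most $\delta$ uniformly over all $s \le \Ltwo{\W}$.

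The main obstacle is the last arithmetic step: verifying that the cross term $-s/(2\sigma)$ in the threshold does not damage the bound, so that the constant $\sqrt{2\ln(2/\delta)}$ (rather than the slightly more common $\sqrt{2\ln(1.25/\delta)}$) suffices. This is handled by noting that worst case $s = \Ltwo{\W}$ makes the threshold $\sqrt{2\ln(2/\delta)} - \epsilon/(2\sqrt{2\ln(2/\delta)})$, and the prefactor $\tfrac{1}{2}$ in the Gaussian tail bound absorbs the resulting $e^{\epsilon/2}$ factor. Smaller $s$ only shrinks the tail, so the bound is uniform. Combining with the reduction above establishes $(\epsilon,\delta)$-differential privacy of $\GM$.
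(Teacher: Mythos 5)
The paper itself offers no proof of this proposition---it is imported verbatim from the cited sources---so the only question is whether your argument stands on its own. Its architecture does: the privacy-loss random variable, the reduction of $(\epsilon,\delta)$-differential privacy to the tail event $\{L(\y)>\epsilon\}$, the observation that $L$ is Gaussian with mean $s^2/(2\sigma^2)$ and variance $s^2/\sigma^2$ for $s=\Ltwo{\W\x-\W\x'}\le\Ltwo{\W}$, and the monotonicity argument showing the worst case is $s=\Ltwo{\W}$ are all correct and are exactly the standard route.

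The gap is the final arithmetic, which you assert rather than verify, and which does not close as claimed. Writing $c=\sqrt{2\ln(2/\delta)}$, the worst-case threshold is $t=c-\epsilon/(2c)$, so
\begin{equation*}
\tfrac{1}{2}e^{-t^2/2}=\tfrac{\delta}{4}\,e^{\epsilon/2-\epsilon^2/(8c^2)} .
\end{equation*}
The prefactor $\tfrac12$ buys only a factor of $4$, whereas $e^{\epsilon/2-\epsilon^2/(8c^2)}$ can be as large as $e^{c^2/2}=2/\delta$; concretely, it exceeds $4$ once $\epsilon/2-\epsilon^2/(8c^2)>\ln 4$, e.g.\ $\epsilon=4$, $\delta=10^{-6}$ gives a bound of about $1.7\delta$. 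So the statement that the $\tfrac12$ absorbs the $e^{\epsilon/2}$ factor uniformly is false, and this is not mere slack in your tail inequality: the calibration $\sigma=\Ltwo{\W}\sqrt{2\ln(2/\delta)}/\epsilon$ is known not to give $(\epsilon,\delta)$-differential privacy for large $\epsilon$, and the cited sources establish the proposition only under a small-$\epsilon$ hypothesis (essentially $\epsilon\le 1$). To repair your proof, state that hypothesis and then finish the computation explicitly: for $\epsilon\le 1$ one has $\epsilon/2-\epsilon^2/(8c^2)\le 1/2$, hence $\tfrac{\delta}{4}e^{1/2}<\delta$, and the threshold $t$ is nonnegative so the tail bound $\Pr[Z>t]\le\tfrac12 e^{-t^2/2}$ applies; together with your (correct) remark that smaller $s$ only raises the threshold, this closes the argument in the regime where the result actually holds.
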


Recall that $\W\x$ is a vector of the true answers to each query in $\W$.  The algorithm above adds independent Gaussian noise (scaled by the sensitivity of $\W$, $\epsilon$, and $\delta$) to each query answer.  Thus $\GM(\W,\x)$ is a length-$m$ column vector containing a noisy answer for each linear query in $\W$.

The matrix mechanism has a similar form as the algorithm above, but adds a more complex noise vector.  It uses a different set of queries (the strategy matrix, $\A$) to construct this vector.

\begin{proposition}{\sc ($(\epsilon,\delta)$-Matrix Mechanism \cite{Li:2010Optimizing-Linear})} \label{def:m-mech} 
Given an $m \times n$ query matrix $\W$, and assuming $\A$ is a full rank $p \times n$ strategy matrix, the randomized algorithm $\MM_\A$ that outputs the following vector is $(\epsilon,\delta)$-differentially private:
\begin{eqnarray*}
\MM_\A(\W,\x) &=& \W\x + \W \A^\plus \Nor(\sigma)^m.
\end{eqnarray*}
where $\sigma=\Ltwo{\A}\sqrt{2\ln(2/\delta)}/\epsilon$
\end{proposition}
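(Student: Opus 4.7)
The plan is to recognize $\MM_\A(\W,\x)$ as a post-processing of the Gaussian mechanism applied to the strategy matrix $\A$. First I would use the full-rank hypothesis on $\A$ to observe that its Moore--Penrose pseudoinverse satisfies $\A^\plus \A = \I_n$, so that
$$\MM_\A(\W,\x) = \W\A^\plus \A \x + \W\A^\plus \Nor(\sigma)^p = \W\A^\plus\bigl(\A\x + \Nor(\sigma)^p\bigr).$$
This exposes the mechanism as the deterministic linear map $\y \mapsto \W\A^\plus \y$ applied to the vector $\y \eqbydef \A\x + \Nor(\sigma)^p$.

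Next, I would invoke Proposition \ref{thm:l2diffpriv} for the strategy matrix $\A$ at the specified scale $\sigma = \Ltwo{\A}\sqrt{2\ln(2/\delta)}/\epsilon$. This immediately yields that $\y$, the Gaussian mechanism's output on $\A$, is $(\epsilon,\delta)$-differentially private. The final step is the standard post-processing property of differential privacy: any function (deterministic or randomized, independent of the raw data) of a differentially private release is itself $(\epsilon,\delta)$-differentially private. Applying this closure property to the linear map $\y \mapsto \W\A^\plus \y$ yields the claimed privacy guarantee for $\MM_\A(\W,\x)$.

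I do not expect a serious obstacle. The only step where the hypotheses do real work is the identity $\A^\plus \A = \I_n$, which requires $\A$ to have full column rank; absent this, $\W\A^\plus\A$ may differ from $\W$ and the reduction collapses. One should also check that $\sigma$ matches the scale demanded by Proposition \ref{thm:l2diffpriv} at sensitivity $\Ltwo{\A}$, which is immediate from its formula. Stating the post-processing invariance once then lets us move from the noisy strategy answer to the $m$-dimensional workload answer without any additional privacy loss.
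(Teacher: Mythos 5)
Your proposal is correct and takes essentially the same route as the paper, which justifies the mechanism as (1) submitting the strategy $\A$ to the Gaussian mechanism, (2) forming the least-squares estimate $\estx = \A^\plus(\A\x + \Nor(\sigma)^p)$ (unique because $\A$ has full column rank, so $\A^\plus\A = \I$), and (3) releasing $\W\estx = \W\A^\plus(\A\x + \Nor(\sigma)^p)$ --- exactly your post-processing decomposition built on Proposition~\ref{thm:l2diffpriv}. You also implicitly fix the paper's dimension slip (the noise vector should have length $p$, the number of strategy queries, rather than $m$), which is the right reading.
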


Here $\A^\plus$ is the pseudo-inverse of $\A$: $\A^\plus = \inv{(\A^T\A)}\A^T$; if $\A$ is a square, then $\A^\plus$ is just the inverse of $\A$.  The intuitive justification for this mechanism is that it is equivalent to the following three-step process: (1) the queries in the strategy are submitted to the Gaussian mechanism; (2) an estimate $\estx$ for $\x$ is derived by computing the $\estx$ that minimizes the squared sum of errors (this step consists of standard linear regression and requires that $\A$ be full rank to ensure a unique solution); (3) noisy answers to the workload queries are then computed as $\W\estx$.  The answers to $\W$ derived in step (3) are always consistent because they are computed from a single noisy version of the cell counts, $\estx$.

Like the Gaussian mechanism, the matrix mechanism computes the true answer vector $\W\x$ and adds noise to each component.  But a key difference is that the scale of the Gaussian noise is {\em calibrated to the sensitivity of the strategy matrix $\A$, not that of the workload}.  In addition, the noise added to query answers is no longer independent, because the vector of independent Gaussian samples is transformed by the matrix $\W\A^\plus$.  




\begin{example}  

The full rank strategy matrix with least sensitivity is the identity matrix, $\I$, which has sensitivity 1.  With $\A=\I$, the matrix mechanism formalizes the approach mentioned in Sec. \ref{sec:intro} which computes individual counts and sums them to answer range queries.  At the other extreme, the workload itself can be used as the strategy: $\A=\W$.  In this case, there is no benefit in sensitivity over the Gaussian mechanism.  For many workloads, neither of these naive strategies offer optimal error.  For $\W=\allrange(n)$, two strategies were recently proposed. The {\em hierarchical} strategy includes the total sum over the whole domain, the count of each half of the domain, and so on, terminating with counts of individual elements of the domain.  The {\em wavelet} strategy consists of the matrix describing the Haar wavelet transformation.  Informally, these achieve low error because they each have low sensitivity, $O(logn)$, and every range query can be expressed as a linear combination of few strategy queries.

\end{example}

\subsection{Error of the Matrix Mechanism} \label{sec:sub:error}

We measure the accuracy of differentially private query answers using mean squared error.  For a workload of queries, the error is defined as the total of individual query errors. 

\begin{definition}[Query and Workload Error] Let $\estw$ be the estimate for query $\w$ under the matrix mechanism using query strategy $\A$.  That is, $\estw=\MM_\A(\w,\x)$.  The mean squared error of the estimate for $\w$ using strategy $\A$ is: $$\error{\A}{\w} \eqbydef \E[ ( \w\x - \estw)^2 ].$$ 
Given a workload $\W$, the total mean squared error of answering $\W$ using strategy $\A$ is: $\error{\A}{\W} =$ \\
$\sum_{\w_i \in \W} \error{\A}{\w_i}$.
\end{definition}

The query answers returned by the matrix mechanism are linear combinations of noisy strategy query answers to which independent Gaussian noise has been added.  Thus, as the following proposition shows, we can directly compute the error for any linear query $\w$ or workload $\W$:


\begin{proposition}{\sc (Total Error)} \label{prop:totalerror}
Given a workload $\W$, the total error of answering $\W$ using the $(\epsilon,\delta)$ matrix mechanism with query strategy $\A$ is:
\begin{equation}\label{eqn:totalerror}
 \error\A{\W} = P(\epsilon, \delta)|| \A ||_2^2 \;\tr (\W^T\W(\A^T\A)^{-1})
 \end{equation}
where $P(\epsilon, \delta)=\frac{2\log(2/\delta)}{\epsilon^2}$.
\end{proposition}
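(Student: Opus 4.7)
My plan is to compute the error of a single linear query $\w$ under $\MM_{\A}$ directly from its definition, and then sum over the rows of $\W$ using the linearity of expectation and the cyclic property of the trace.

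First I would isolate the noise in the estimator. The matrix mechanism returns $\w\x + \w\A^{\plus}\vect{z}$, where $\vect{z}$ is a vector of $p$ i.i.d.\ samples from $\Nor(\sigma)$ with $\sigma=\Ltwo{\A}\sqrt{2\ln(2/\delta)}/\epsilon$. Therefore the deviation from the true answer is the scalar $\w\A^{\plus}\vect{z}$, and
\[
\error{\A}{\w} \;=\; \E\bigl[(\w\A^{\plus}\vect{z})^2\bigr] \;=\; \w\A^{\plus}\,\E[\vect{z}\vect{z}^{T}]\,(\A^{\plus})^{T}\w^{T} \;=\; \sigma^{2}\,\w\,\A^{\plus}(\A^{\plus})^{T}\w^{T},
\]
since $\E[\vect{z}\vect{z}^{T}]=\sigma^{2}\I$. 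Substituting $\A^{\plus}=(\A^{T}\A)^{-1}\A^{T}$ (valid because $\A$ is full rank by hypothesis) gives $\A^{\plus}(\A^{\plus})^{T} = (\A^{T}\A)^{-1}\A^{T}\A(\A^{T}\A)^{-1} = (\A^{T}\A)^{-1}$, so $\error{\A}{\w} = \sigma^{2}\,\w(\A^{T}\A)^{-1}\w^{T}$.

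Next I would sum over the rows $\w_{i}$ of $\W$. Each term $\w_{i}(\A^{T}\A)^{-1}\w_{i}^{T}$ is a scalar equal to its own trace, so
\[
\sum_{i} \w_{i}(\A^{T}\A)^{-1}\w_{i}^{T} \;=\; \sum_{i}\tr\bigl(\w_{i}(\A^{T}\A)^{-1}\w_{i}^{T}\bigr) \;=\; \tr\bigl(\W(\A^{T}\A)^{-1}\W^{T}\bigr) \;=\; \tr\bigl(\W^{T}\W(\A^{T}\A)^{-1}\bigr),
\]
where the last step uses cyclicity of the trace. Finally I would plug in $\sigma^{2} = \Ltwo{\A}^{2}\cdot 2\ln(2/\delta)/\epsilon^{2} = P(\epsilon,\delta)\,\Ltwo{\A}^{2}$ to obtain equation~(\ref{eqn:totalerror}).

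There is no real obstacle here: the statement is essentially an algebraic consequence of the definition of $\MM_{\A}$, the i.i.d.\ Gaussian noise model from Prop.~\ref{thm:l2diffpriv}, and standard identities for the pseudo-inverse and trace. The only step requiring a word of care is the appeal to full rank of $\A$, which is what guarantees that $(\A^{T}\A)^{-1}$ exists and that $\A^{\plus}(\A^{\plus})^{T}$ simplifies as above; this is already part of the proposition's hypothesis in the definition of the mechanism.
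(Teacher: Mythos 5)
Your derivation is correct and is exactly the standard calculation the paper relies on: the paper states Prop.~\ref{prop:totalerror} without proof (as a direct consequence of the mechanism's definition, following the matrix-mechanism framework of Li and Miklau), and your route---variance of $\w\A^{\plus}\vect{z}$, the identity $\A^{\plus}(\A^{\plus})^{T}=(\A^{T}\A)^{-1}$ for full-rank $\A$, then summing over rows and using cyclicity of the trace---is precisely that argument. You even correctly use a length-$p$ noise vector where the paper's statement of $\MM_\A$ has a minor dimensional typo ($\Nor(\sigma)^m$), so there is nothing to fix.
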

\eat{According to proposition~\ref{prop:totalerror}, when the privacy parameters $\epsilon$ and $\delta$ are chosen, the total error of using strategy $\A$ to answer workload $\W$ is explicitly determined by 
\[|| \A ||_2^2 \tr (\W^T\W(\A^T\A)^{-1}).\]
To make the representation easier, in the rest of this paper, we always assume $P(\epsilon, \delta)=1$ so that
\[ \totalerror\A{\W} = || \A ||_2^2\tr (\W^T\W(\A^T\A)^{-1}).\]}

The $trace$ is the sum of the elements on the main diagonal of a square matrix.  Our main objective is to minimize this formula, which is determined by the relationship between $\A$ and $\W$.  Note that $\x$ (the vector of cell counts corresponding to the database) does not appear in this expression.  The minimum error strategy depends on the workload alone, not on the database instance.

\subsection{The strategy design problem}\label{sec:probstat}

The optimal strategy for a workload $\W$ is defined to be one that minimizes the total error under the $(\epsilon, \delta)$-matrix mechanism.
\begin{problem}{\sc (Minimizing Total Error)}  \label{prob:mintotal}
Given a workload $\W$, find a query strategy $\A_0$ such that:
\begin{equation}\label{eqn:mintotalerror}
	\error{\A_0}\W = {\arg\min}_\A\error\A\W.
\end{equation}
\end{problem}

The exact solution to Problem \ref{prob:mintotal} can be computed using a semi-definite program (SDP) \cite{Li:2010Optimizing-Linear}.  However, finding the solutions of the program with standard SDP solvers takes $O(n^8)$ time, making it infeasible for common applications.  Therefore, our goal in this paper is to efficiently find approximations to the optimal strategy, for any provided workload.

\section{Theoretical Analysis}\label{sec:theory}

In this section we present novel theoretical results which provide a foundation for the algorithms that come later, and which aid in evaluating both the quality of strategy matrices and the complexity of workloads.

%

\subsection{Analysis of Strategies}

An analysis of strategies allows us to characterize those which are possible solutions to the strategy design problem.

\begin{definition}[Partial Order on Strategies]\label{def:partialorder}
The following relation defines a partial order on strategies.\vspace{1ex}\\ 
\textbf{Strategy Equivalence:} Two strategy matrices $\A_1$ and $\A_2$ are equivalent ($\A_1 \equiv \A_2$) if for all linear queries $\q$, 
\[\error{\A_1}{\q}=\error{\A_2}{\q}.\]  
\textbf{Strategy Efficiency:} Strategy $\A_1$ is more efficient than $\A_2$, written $\A_1 \leq \A_2$, if for all linear queries $\q$, 
\[\error{\A_1}{\q}\leq\error{\A_2}{\q}.\]
$\A_1$ is strictly more efficient than $\A_2$ if for all queries $\q$, $\error{\A_1}{\q}$ $< \error{\A_2}{\q}$ .
\end{definition}

A strategy can never be the solution to the strategy design problem (Problem \ref {prob:mintotal}) if there is another strategy that is more efficient than it.  We call a strategy \textsl{minimal} if it is a minimal element in the partial order above.  

We show next that the set of minimal strategies has a straightforward characterization for the $(\epsilon,\delta)$ matrix mechanism.  Notice that since the sensitivity of a strategy is measured according to its maximum column norm, a strategy with some columns that do not match the maximum is wasteful: we could add queries containing non-zero entries in the deficient columns without increasing sensitivity.  These added queries will provide additional evidence that can be used to reduce overall error, so completing deficient columns can never result in a worse strategy.  

\begin{definition}[Column Uniform Matrix]
A matrix is~\textsl{column-uniform} with respect to $L_p$ if each of it columns has the same $L_p$ norm.  
\end{definition}

This suggests that column uniformity is a necessary condition for a minimal strategy.  But we can also show that it is a sufficient condition. (The proof is included in App.~\ref{app:propstrategy}.)
\begin{restatable}{theorem}{thmcolumnunif} \label{thm:columnunif}
A strategy matrix $\A$ is minimal iff it is column-uniform.
\end{restatable}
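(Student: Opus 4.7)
The plan is to work directly with the closed-form error expression from Proposition~\ref{prop:totalerror}. For a single row query $\q$ the trace collapses to a scalar and the formula becomes
$$\error{\A}{\q} \;=\; P(\epsilon,\delta)\,\|\A\|_2^2 \cdot \q(\A^T\A)^{-1}\q^T,$$
so the relation $\A' \leq \A$ of Definition~\ref{def:partialorder} is equivalent to the Loewner (positive semidefinite) inequality $\|\A'\|_2^2\,(\A'^T\A')^{-1} \preceq \|\A\|_2^2\,(\A^T\A)^{-1}$, or by inversion to $\A^T\A / \|\A\|_2^2 \preceq \A'^T\A' / \|\A'\|_2^2$. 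Both directions of the theorem reduce to manipulating this single matrix inequality.

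For the ``only if'' direction I argue the contrapositive. Suppose column $j$ of $\A$ has $\|A_j\|_2 < \|\A\|_2$. I form $\A'$ by appending a single new row $\v$ whose only nonzero entry is in coordinate $j$, chosen so that the augmented $j$th column has norm exactly $\|\A\|_2$. Then $\|\A'\|_2=\|\A\|_2$, while $\A'^T\A' = \A^T\A + \v^T\v$ is a rank-one positive update of the Gram matrix, and the Sherman--Morrison identity gives
$$(\A^T\A)^{-1} - (\A'^T\A')^{-1} \;=\; \frac{(\A^T\A)^{-1}\v^T\v(\A^T\A)^{-1}}{1 + \v(\A^T\A)^{-1}\v^T},$$
which is a nonzero PSD matrix. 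Hence $\error{\A'}{\q}\leq\error{\A}{\q}$ for every $\q$, with strict inequality on any $\q$ satisfying $\q(\A^T\A)^{-1}\v^T\neq 0$ (e.g.\ $\q = \v$). Thus $\A' \leq \A$ and $\A'\not\equiv\A$, contradicting minimality.

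For the ``if'' direction, assume $\A$ is column-uniform with $\|A_j\|_2^2 = c$ for every $j$, and let $\A'$ be any strategy with $\A'\leq\A$; write $c' = \|\A'\|_2^2$. The Loewner inequality $\A^T\A/c \preceq \A'^T\A'/c'$ implies the scalar trace inequality $\tr(\A^T\A)/c \leq \tr(\A'^T\A')/c'$. The left side equals $n$ because every diagonal entry of $\A^T\A/c$ is $1$ by column-uniformity. The right side equals $\sum_j \|A'_j\|_2^2/c'$ and is at most $n$ because each column norm of $\A'$ is bounded by its sensitivity $\sqrt{c'}$. Consequently both traces equal $n$, and the PSD difference $\A'^T\A'/c' - \A^T\A/c$ has trace zero; a PSD matrix with zero trace must vanish. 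Therefore $\A'^T\A'/c' = \A^T\A/c$, giving $\error{\A'}{\q}=\error{\A}{\q}$ for every $\q$, i.e.\ $\A'\equiv\A$. No strategy lies strictly below $\A$, so $\A$ is minimal.

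The subtlety I expect to wrestle with is the converse: it is not obvious a priori that column-uniformity prevents improvement simultaneously in every query direction. The crux is recognizing that column-uniformity is exactly the saturation condition for the elementary bound $\tr(X^TX) \leq n\,\|X\|_2^2$, which then combines with the trace-monotonicity of the Loewner order and the ``PSD with zero trace is zero'' fact to pin down $\A'^T\A'/c'$ exactly. The forward direction is essentially a rigorous implementation of the informal ``boost the deficient column'' idea sketched in the paragraph preceding the theorem.
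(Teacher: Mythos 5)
Your proof is correct, and while its algebraic kernel coincides with the paper's (a Loewner comparison of the normalized Gram matrices $\A^T\A/\Ltwo{\A}^2$ followed by the fact that a positive semidefinite matrix with vanishing trace, or vanishing diagonal, must be zero), it is genuinely more self-contained at both ends. For the direction ``column-uniform $\Rightarrow$ minimal,'' the paper only compares two \emph{column-uniform} strategies: after normalizing both to sensitivity $1$, both Gram matrices have unit diagonal, so the PSD difference has zero diagonal and vanishes; a non-column-uniform competitor is implicitly handled by first improving it via the augmenting-strategy theorem cited from the prior matrix-mechanism paper. You instead admit an arbitrary competitor $\A'$ and replace the diagonal argument by the trace bound $\tr(\A'^T\A')\leq n\Ltwo{\A'}^2$, whose saturation is exactly column-uniformity of $\A$ on the left-hand side; this removes the detour through the cited theorem and even yields the slightly stronger conclusion that \emph{every} $\A'\leq\A$ is equivalent to $\A$. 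For the direction ``minimal $\Rightarrow$ column-uniform,'' the paper simply cites the augmenting-strategy theorem, whereas you re-derive it concretely: appending one row supported on the deficient column keeps $\Ltwo{\A}$ fixed, and Sherman--Morrison exhibits the PSD decrease of $\Ltwo{\A}^2(\A^T\A)^{-1}$ together with a witness query ($\q=\v$) on which the improvement is strict. One small caveat: your contradiction reads ``minimal'' as ``no inequivalent strategy weakly below $\A$,'' whereas the paper's Definition~\ref{def:partialorder} states strict efficiency as strict inequality on \emph{all} queries; your construction (like the paper's cited augmentation) improves only some queries strictly, so you are using the equivalence-class reading of minimality that the paper itself intends---worth one clarifying sentence, but not a gap in substance.
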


Another important property of the $(\epsilon,\delta)$ matrix mechanism is that redundant queries do {\em not} lead to less efficient strategies.  We say a query in strategy matrix $\A$ is {\em redundant} if it is a linear combination of another query in $\A$.  In this case the queries provide the same evidence about the database, but one may be scaled relative to the other, resulting in lesser or greater accuracy.  The following theorem shows that a strategy with a redundant queries is equivalent to a strategy with the redundancy removed.  

\begin{theorem}[Redundant Queries]\label{thm:redqueries}
Suppose strategy $\A_1 = \{\A_0 \cup \q \cup c_1\q\}$ for some strategy $\A_0$, some linear query $\q$, and some constant $c_1$.  Then $\A_1$ is equivalent to the reduced strategy $\A_2 = \{\A_0 \cup c_2\q\}$ where $c_2 = \sqrt{1+c_1^2}$.
\end{theorem}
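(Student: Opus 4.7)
The plan is to reduce the equivalence claim to an algebraic identity between the two matrices $\A_1$ and $\A_2$, using the explicit error formula from Proposition~\ref{prop:totalerror}. Applied to a single linear query $\q$, that formula gives $\error{\A}{\q} = P(\epsilon,\delta)\,\Ltwo{\A}^2 \, \q(\A^T\A)^{-1}\q^T$. So to show $\A_1 \equiv \A_2$ it suffices to show two things: (i) $\A_1^T\A_1 = \A_2^T\A_2$, and (ii) $\Ltwo{\A_1} = \Ltwo{\A_2}$. If both hold, then the per-query error formulas agree identically on every $\q$ and strategy equivalence follows immediately.

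For step (i), I would use the row decomposition $\A^T\A = \sum_i r_i^T r_i$ over the rows $r_i$ of $\A$. Writing the rows of $\A_1$ as the rows of $\A_0$ together with $\q$ and $c_1\q$, this yields
\[
\A_1^T\A_1 \;=\; \A_0^T\A_0 \;+\; \q^T\q \;+\; c_1^2\,\q^T\q \;=\; \A_0^T\A_0 \;+\; (1+c_1^2)\,\q^T\q.
\]
The rows of $\A_2$ are the rows of $\A_0$ together with $c_2\q$, so $\A_2^T\A_2 = \A_0^T\A_0 + c_2^2\,\q^T\q$. The choice $c_2 = \sqrt{1+c_1^2}$ makes these expressions identical, giving $\A_1^T\A_1 = \A_2^T\A_2$ and hence the same inverse (assuming full rank, which transfers between the two strategies since they have the same row space).

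For step (ii), I would compute column $L_2$ norms directly. If $v_i$ denotes column $i$ of $\A_0$, then column $i$ of $\A_1$ is $(v_i,\,q_i,\,c_1 q_i)$ with squared norm $\Ltwo{v_i}^2 + (1+c_1^2)\,q_i^2$, while column $i$ of $\A_2$ is $(v_i,\,c_2 q_i)$ with the same squared norm by definition of $c_2$. Taking the maximum over $i$ shows $\Ltwo{\A_1} = \Ltwo{\A_2}$.

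I do not anticipate a real obstacle here; the content of the theorem is essentially the observation that both $\A^T\A$ and the column-norm profile depend on the rows of $\A$ only through the sums of outer products and squared entries, and $\q$ together with $c_1\q$ contribute exactly $(1+c_1^2)$ times a single copy to both quantities. The only subtle bookkeeping point is checking that the full-rank requirement needed to invoke Proposition~\ref{prop:totalerror} is preserved, which follows because $\A_1$ and $\A_2$ span the same rows as $\A_0 \cup \{\q\}$.
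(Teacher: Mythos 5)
Your argument is correct and matches the paper's own justification: the paper omits an explicit proof and relies on the observation (made in its appendix comparing the two mechanisms) that the error formula depends on $\A$ only through $\A^T\A$, since $\Ltwo{\A}$ is the square root of the largest diagonal entry of $\A^T\A$; your row outer-product computation showing $\A_1^T\A_1 = \A_0^T\A_0 + (1+c_1^2)\q^T\q = \A_2^T\A_2$, together with the resulting equality of column norms and rank, is exactly that argument made explicit. No gaps.
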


The fact that the presence of redundant queries does not lead to less efficient strategies has important consequences for efficient strategy design algorithms.  Because of this property, an algorithm can make a local choice to add a query to a strategy, adding the same query again later to augment its weight if necessary.  

Thm.~\ref{thm:columnunif} and Thm.~\ref{thm:redqueries} do not hold for the $\epsilon$-matrix mechanism.  Please see App. \ref{app:l1l2} for details.

\subsection{The singular value bound}

Next we explain our main theoretical result, a lower bound on the error for a workload.  We first define the singular value decomposition of a workload.

\begin{definition}[Decomposition of Workload]
Let $\W$ be any $m \times n$ query workload.  The singular value decomposition (SVD) of $\W$ is a factorization of the form $\A = \Q_\W \D_\W \P_\W^T$ such that $\Q_\W$ is a $m \times m$ orthogonal matrix, $\D_\W$ is a $m \times n$ diagonal matrix and $\P_\W$ is a $n \times n$ orthogonal matrix.  When $m > n$, the diagonal matrix $\D_\W$ consists of an $n \times n$ diagonal submatrix combined with $\vect{0}^{(m-n) \times n}$.
\end{definition}


The bound on the total error for a workload is derived from its singular value decomposition.

\begin{restatable}{theorem}{thmsvdbound}{\sc (Singular Value Bound)}\label{thm:singularvaluebound}
Given an $m \times n$ workload $\W$, let $\lambda_1, \lambda_2,\ldots, \lambda_n$ be the singular values of $\W$.
\[\min_\A\error{\A}\W\geq P(\epsilon, \delta)\frac{1}{n}(\sum_{i=1}^n\lambda_i)^2,\]
where $P(\epsilon, \delta)=\frac{2\log(2/\delta)}{\epsilon^2}$.
\end{restatable}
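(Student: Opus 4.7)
The plan is to work directly from the error formula in Proposition 3.1 and reduce the minimization to a clean trace inequality whose left-hand side involves only the spectrum of $\W$. By Proposition 3.1,
\[
\error{\A}{\W} = P(\epsilon,\delta)\,\|\A\|_2^2\,\tr\!\bigl(\W^T\W(\A^T\A)^{-1}\bigr),
\]
so it suffices to show $\|\A\|_2^2\,\tr(\W^T\W(\A^T\A)^{-1}) \geq \frac{1}{n}(\sum_i \lambda_i)^2$ for every full-rank strategy $\A$. The factor $P(\epsilon,\delta)$ then carries through.

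First I would handle the sensitivity factor $\|\A\|_2^2$ using a universal comparison between the maximum column norm and the average column norm: for any $p \times n$ matrix $\A$,
\[
\|\A\|_2^2 \;=\; \max_{i}\|\A_i\|_2^2 \;\geq\; \frac{1}{n}\sum_i \|\A_i\|_2^2 \;=\; \frac{\tr(\A^T\A)}{n},
\]
with equality exactly when $\A$ is column-uniform. Theorem~\ref{thm:columnunif} provides a conceptual justification for this reduction (optimal strategies are column-uniform), but the inequality alone suffices. Substituting yields
\[
\|\A\|_2^2\,\tr\bigl(\W^T\W(\A^T\A)^{-1}\bigr) \;\geq\; \tfrac{1}{n}\,\tr(\M)\,\tr(B\M^{-1}),
\]
where I abbreviate $\M = \A^T\A$ (positive definite, since $\A$ is full rank) and $B = \W^T\W$ (positive semidefinite, with eigenvalues $\lambda_1^2,\dots,\lambda_n^2$).

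The core of the proof is then a trace version of Cauchy--Schwarz applied to $\tr(\M)\,\tr(B\M^{-1})$. With the Frobenius inner product, $|\tr(X^T Y)|^2 \le \tr(X^T X)\,\tr(Y^T Y)$. The natural choice is $X = \M^{1/2}$ and $Y = \M^{-1/2} B^{1/2}$, using the PSD square roots. Then $\tr(X^T X) = \tr(\M)$, $\tr(Y^T Y) = \tr(B^{1/2}\M^{-1}B^{1/2}) = \tr(B\M^{-1})$ by cyclicity, and
\[
\tr(X^T Y) = \tr\bigl(\M^{1/2}\M^{-1/2}B^{1/2}\bigr) = \tr(B^{1/2}) = \sum_{i=1}^n \lambda_i.
\]
Combining the two inequalities gives the stated bound.

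The main obstacle is identifying the correct Cauchy--Schwarz witnesses; everything else is cosmetic. One small care-point is that $\M^{1/2}$ and $B^{1/2}$ must be defined via the spectral calculus (symmetric PSD roots), so that cyclicity of trace and $\M^{1/2}\M^{-1/2} = \I$ hold unambiguously; the rest of the calculation is a routine trace manipulation.
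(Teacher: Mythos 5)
Your proof is correct, and it reaches the bound by a noticeably different route than the paper. The paper first restricts attention to column-uniform strategies (via Thm.~\ref{thm:columnunif} and Thm.~\ref{thm:columnnormuniform}), so that $\Ltwo{\A}^2=\tfrac{1}{n}\tr(\A^T\A)$ holds with equality, then takes SVDs of both $\A$ and $\W$, argues that the minimizing eigenvector matrix must align with that of the workload ($\P_\A=\P_\W$, by an argument analogous to Lemma~\ref{lem:independentqueries}), and finally applies a scalar Cauchy--Schwarz to the diagonal spectra, yielding the minimizer $\LambdaB_\A=\sqrt{\LambdaB_\W}$. You instead use the pointwise inequality $\Ltwo{\A}^2=\max_i\Ltwo{\A_i}^2\geq\tfrac{1}{n}\tr(\A^T\A)$, valid for \emph{every} full-rank strategy, and then dispose of the remaining term with a single matrix Cauchy--Schwarz in the Frobenius inner product, $\bigl(\tr(B^{1/2})\bigr)^2\leq\tr(\M)\,\tr(B\M^{-1})$ with $X=\M^{1/2}$, $Y=\M^{-1/2}B^{1/2}$; your witnesses are chosen correctly and the trace manipulations check out. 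What your version buys is self-containment: it needs neither the column-uniformity characterization nor the eigenvector-alignment step, and it bounds $\error{\A}{\W}$ for all $\A$ rather than only for a purported optimizer, so no existence or structural claims about the minimizer are invoked. What the paper's version buys is the explicit description of when the bound is attained ($\P_\A=\P_\W$, $\LambdaB_\A=\sqrt{\LambdaB_\W}$, i.e.\ $\A^T\A\propto(\W^T\W)^{1/2}$ with column uniformity), which is exactly what is reused later to construct optimal strategies for variable-agnostic workloads (Thm.~\ref{thm:varagn}); if you wanted the same, you would need to track the equality conditions in both of your inequalities, which in fact recover the same characterization.
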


A strategy matrix $\A$ can be considered to have two parts: its eigenvalues and its eigenvectors.  When $\A$ is column-uniform, the total error can be represented as a polynomial of those two parts.  The key idea behind the SVD bound is to ignore the constraint that $\A$ is column uniform and choose eigenvalues and eigenvectors separately to minimize the polynomial, which leads to an under-estimate of the total error. The details of this proof are presented in App.~\ref{app:propstrategy}. 

We use $\svdb(\W)$ as shorthand for the singular value bound of workload $\W$. Given a workload $\W$, $\svdb(\W)$ can be computed easily using standard methods for matrix decomposition, or it can be computed directly from $\W^T\W$: 
\begin{corollary}
Given an $n \times n$ positive semi-definite matrix $\W^T\W$, let $\lambda_1, \ldots, \lambda_n$ be the eigenvalues of $\W^T\W$.
\begin{equation} \label{eq:svdb2}
\min_\A\error{\A}\W\geq P(\epsilon, \delta)\frac{1}{n}(\sum_{i=1}^n\sqrt\lambda_i)^2,
\end{equation}
where $P(\epsilon, \delta)=\frac{2\log(2/\delta)}{\epsilon^2}$.
\end{corollary}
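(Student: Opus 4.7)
The plan is to derive this corollary directly from Theorem 3.2 (the singular value bound) by relating the singular values of $\W$ to the eigenvalues of the Gram matrix $\W^T\W$.

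First, I would take the SVD of $\W$ as given in the excerpt, $\W = \Q_\W \D_\W \P_\W^T$, with singular values $\mu_1,\ldots,\mu_n$ appearing on the diagonal of $\D_\W$ (using $\mu_i$ to distinguish from the eigenvalues $\lambda_i$ in the statement). Then I would compute
\[
\W^T \W \;=\; \P_\W \D_\W^T \Q_\W^T \Q_\W \D_\W \P_\W^T \;=\; \P_\W\,(\D_\W^T \D_\W)\,\P_\W^T,
\]
using the orthogonality $\Q_\W^T \Q_\W = \I$. The matrix $\D_\W^T \D_\W$ is the $n \times n$ diagonal matrix with entries $\mu_i^2$, so the right-hand side is an eigendecomposition of $\W^T\W$ (with orthogonal eigenvector matrix $\P_\W$). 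Hence the eigenvalues of $\W^T\W$ are exactly $\lambda_i = \mu_i^2$.

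Next, since $\W^T\W$ is positive semi-definite, $\lambda_i \geq 0$, and since singular values are non-negative by convention, $\mu_i \geq 0$. Therefore $\sqrt{\lambda_i} = \mu_i$, which gives
\[
\sum_{i=1}^n \sqrt{\lambda_i} \;=\; \sum_{i=1}^n \mu_i.
\]
Substituting this identity into the bound of Theorem~\ref{thm:singularvaluebound} immediately yields
\[
\min_\A \error{\A}{\W} \;\geq\; P(\epsilon,\delta)\,\frac{1}{n}\Bigl(\sum_{i=1}^n \sqrt{\lambda_i}\Bigr)^{\!2},
\]
which is the claim in \eqref{eq:svdb2}.

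There is no real obstacle here: the corollary is a change-of-variables statement, and the only thing worth being careful about is the sign convention. One should note explicitly that the standard SVD yields non-negative singular values and that $\W^T\W$, being PSD, has non-negative eigenvalues, so $\sqrt{\lambda_i}$ is unambiguously $\mu_i$ rather than $-\mu_i$. The value of this formulation is purely computational — one can compute the bound directly from $\W^T\W$ (often much smaller than $\W$ when $m \gg n$) without forming the SVD of $\W$ itself.
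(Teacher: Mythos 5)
Your proposal is correct and matches the paper's (implicit) reasoning: the paper treats this corollary as an immediate consequence of Theorem~\ref{thm:singularvaluebound}, relying on exactly the standard fact you prove, namely that the eigenvalues of $\W^T\W$ are the squares of the singular values of $\W$, so $\sqrt{\lambda_i}=\mu_i$ and the bound carries over by substitution. Your explicit attention to the non-negativity conventions is a fine touch but does not change the substance.
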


For $m \times n$ workload $\W$, computing $\svdb(\W)$ takes $O(mn^2)$ time on $\W$ itself and takes $O(n^3)$ time on $\W^T\W$, which significantly reduces the running time when the number of queries, $m$, is much larger than $n$. In the case of large regular workloads such as $\allrange$ and $\allpred$, $\W^T\W$ can be computed directly.  Then computing $\svdb(\W)$ improves from $O(n^5)$ to $O(n^3)$ for $\allrange$ and $O(n^22^n)$ to $O(n^3)$ for $\allpred$. 

\begin{example} For $\allrange(1024)$ and $\allrange(32, 32)$, the SVD bound on the total error is $5.32\times10^6$ and $4.39\times 10^6$, respectively. Below we report, as a ratio of the SVDB bound, the total error of these workloads for a number of known strategy matrices: the workload itself, the identity, hierarchical and wavelet: \\
\begin{tabular}{c|cccc}
&workload & identity & hierarchical & wavelet\\
 \hline $1024$ & $50.58$ & $33.75$ & $2.14$ & $1.84$\\
$32\times 32$ & $17.25$ & $8.15$ & $ 2.92$ &  $2.23$\\
\end{tabular}\\
In the next section we present an algorithm that finds better strategies: the ratio of total error on $\allrange(1024)$ and $\allrange(32, 32)$ using the strategies computed by the algorithm is $1.26$ and $1.08$, respectively.
\end{example}

\subsection{Properties of the singular value bound}

The SVD bound has a number of properties which make it a reliable measure of workload complexity.  Notice that in the expression for $\error\A{\W}$ in Prop. \ref{prop:totalerror}, the workload $\W$ appears only in the trace term, as $\W^T\W$.  An immediate implication is that a strategy that achieves minimal total error for a given workload $\W_1$ achieves minimal error for any workload $\W_2$ such that $\W_1^T\W_1 = \W_2^T\W_2$.  We therefore define the following notion of equivalence:

\begin{definition}[Workload Equivalence] An $n \times m_1$ workload $\W_1$ and an $n \times m_2$ workload $\W_2$ are equivalent, denoted $\W_1 \equiv \W_2$, iff $\W_1^T\W_1 = \W_2^T\W_2$.
\end{definition}

Since the SVD bound is determined by the singular values, it follows immediately that equivalent workloads have equal error bounds. That is, if $\W_1 \equiv \W_2$, then $\svdb(\W_1)=\svdb(\W_2)$.  In addition, as we would hope, the SVD bound of a workload increases monotonically under the addition of new queries.  Thus if $\W_1$ is a workload and $\W_2$ is a workload that results from adding one or more linear queries to the rows of $\W_1$, then $\svdb(\W_1) \leq \svdb(\W_2)$.  A more general result is shown below, accounting for the fact that the larger workload may be the augmentation of any workload equivalent to the smaller workload.  (The proof is omitted.)

\begin{restatable}{theorem}{thmcontainedwkld}\label{thm:containedwkld}
Let $\W_1$, $\W_2$ be workloads. If there exists a workload $\W'_2$ equivalent to $\W_2$ and the rows of $\W_2'$ contain all the rows of $\W_1$, then $\svdb(\W_1) \leq \svdb(\W_2)$.
\end{restatable}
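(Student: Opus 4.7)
The plan is to combine workload equivalence with an eigenvalue monotonicity argument coming from Weyl's inequality. Since $\svdb$ is a function of the eigenvalues of $\W^T\W$ (via the corollary after Theorem on the singular value bound), and equivalent workloads yield the same $\W^T\W$, the first step is to reduce the problem to comparing $\svdb(\W_1)$ with $\svdb(\W_2')$, using the fact that $\svdb(\W_2) = \svdb(\W_2')$.

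Next, I would write $\W_2'$ as a row-stacking of $\W_1$ together with the remaining rows: $\W_2' = \bigl[\begin{smallmatrix}\W_1 \\ \W_3\end{smallmatrix}\bigr]$, where $\W_3$ is the matrix consisting of the rows of $\W_2'$ that are not in $\W_1$ (empty if $\W_2'$ and $\W_1$ have the same rows). A direct block computation gives
\[
\W_2'^T \W_2' \;=\; \W_1^T \W_1 + \W_3^T \W_3,
\]
so $\W_2'^T\W_2'$ differs from $\W_1^T\W_1$ by a positive semi-definite perturbation $\W_3^T\W_3$.

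The core step is then eigenvalue monotonicity under PSD perturbations. Sort the eigenvalues of $\W_1^T\W_1$ and $\W_2'^T\W_2'$ in the same (say, non-increasing) order and denote them $\lambda_i(\W_1^T\W_1)$ and $\lambda_i(\W_2'^T\W_2')$. Weyl's inequality for Hermitian matrices states that for any $i$,
\[
\lambda_i(\W_2'^T\W_2') \;\geq\; \lambda_i(\W_1^T\W_1) + \lambda_{\min}(\W_3^T\W_3) \;\geq\; \lambda_i(\W_1^T\W_1),
\]
since $\W_3^T\W_3 \succeq 0$. Because $\sqrt{\cdot}$ is monotone on $[0,\infty)$, this gives $\sqrt{\lambda_i(\W_1^T\W_1)} \leq \sqrt{\lambda_i(\W_2'^T\W_2')}$ for every $i$. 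Summing over $i$ and squaring preserves the inequality, so applying the formula
\[
\svdb(\W) \;=\; P(\epsilon, \delta)\tfrac{1}{n}\Bigl(\sum_{i=1}^n \sqrt{\lambda_i(\W^T\W)}\Bigr)^2
\]
from the corollary yields $\svdb(\W_1) \leq \svdb(\W_2') = \svdb(\W_2)$, completing the argument.

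The main obstacle is conceptual rather than technical: one must recognize that the correct tool is the monotonicity of eigenvalues under PSD perturbation, and be careful that $\W_1$ and $\W_2'$ have the same number of columns $n$ so that the sums of square roots are comparable term-by-term. Once that is set up, invoking Weyl's inequality is immediate and the rest is routine.
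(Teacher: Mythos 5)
Your proof is correct, but it takes a genuinely different route at the key step. You share the paper's setup: replace $\W_2$ by the equivalent $\W_2'$ (legitimate, since $\svdb$ depends only on $\W^T\W$), stack $\W_2' = \bigl[\begin{smallmatrix}\W_1\\ \W_3\end{smallmatrix}\bigr]$, and observe $\W_2'^T\W_2' - \W_1^T\W_1 = \W_3^T\W_3 \succeq 0$. From there the paper does not invoke Weyl; it writes the singular value decompositions of both workloads, notes that positive semi-definiteness of the difference forces $\lambdaB_i^2 \geq \Ltwo{\LambdaB_2\p_i}^2$ for each $i$ (where $\p_i$ is the $i$-th column of $\P_2\P_1^T$ and $\lambdaB_i$ the $i$-th singular value of the larger workload), and then applies a purpose-built lemma, $\tr(\D) \leq \sum_{i=1}^n \Ltwo{\D\p_i}$ for diagonal $\D \succ 0$ and orthogonal $\P$, which is itself extracted from the optimization in the proof of the singular value bound; this directly compares the sums of singular values. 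Your argument instead uses the classical eigenvalue monotonicity under PSD perturbation ($\lambda_i(A + C) \geq \lambda_i(A) + \lambda_{\min}(C) \geq \lambda_i(A)$ for $C \succeq 0$), which is cleaner and arguably stronger: it yields term-by-term domination $\lambda_i(\W_1^T\W_1) \leq \lambda_i(\W_2'^T\W_2')$, from which monotonicity of $\svdb$ (or of any quantity monotone in the spectrum) follows immediately, whereas the paper's lemma-based route only delivers the comparison of the sums $\sum_i \sqrt{\lambda_i}$. The trade-off is that your proof imports a standard external fact (Weyl / Courant--Fischer), while the paper's stays self-contained within the machinery it has already developed for the SVD bound. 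Your attention to the fact that $\W_1$ and $\W_2'$ share the same column dimension $n$ (so the $n$ eigenvalues of the two Gram matrices are comparable index-by-index) is exactly the right care to take.
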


Lastly, and most importantly, the singular value bound is tight.  For a certain set of {\em variable agnostic} workloads, it is possible to directly construct a strategy achieving the bound.  Intuitively, variable agnostic workloads treat every variable in the domain in an equivalent manner.  The workload $\allpred(n)$ is variable agnostic, but $\allrange(n)$ is not because, for example, variables in the middle of the domain occur more often in the set of all range queries.  In App. \ref{app:varagnostic} we show how to construct the optimal strategy for any variable agnostic workload, including for $\allpred(n)$.

We do not know if the singular value bound is achievable for every workload (namely those that are not variable-agnostic).  However experimentally we find that we can compute strategies that approach this bound.


\section{Strategy Selection Algorithms} \label{sec:algorithm}

In this section we present an algorithm, along with a set of performance optimizations, for computing a close-to-optimal strategy for a given workload.

\subsection{The Level Selection Algorithm (\lbl)}
The Level Selection Algorithm (\lbl) takes as input a workload and returns a strategy matrix designed to offer low error for the workload.  $\lbl$ is a localized search algorithm which builds a strategy matrix by choosing, at each step, the {\em level} of queries whose addition maximally reduces error for the workload.  A level is a set of linear queries consisting of coefficients 0 or 1, determined by a partitioning of the variables, so that each variable appears in exactly one query.  Recall that, according to Thm.~\ref{thm:columnunif}, minimal strategy matrices are precisely those that are column uniform.  By constructing a strategy by levels, the $\lbl$ algorithm always chooses among minimal strategies. 

To construct a new level, the $\lbl$ algorithm starts with the simplest level: the query $[1,1, \dots 1]$ which is the sum of all cells.  The algorithm then performs a top-down search, recursively bisecting the query into smaller queries such that the error of the workload is maximally reduced.  Once a level is completed, the $\lbl$ algorithm computes the total error of the workload with and without that level.  If the total error is not improved by the level, the algorithm terminates and outputs the current strategy.  Otherwise the level is added to the output strategy, and, subject to a user-defined threshold $k$ on the number of levels, the next level is then constructed.  

The search space of $\lbl$ is quite general.  Both the hierarchical strategies from \cite{Hay:2010Boosting-the-Accuracy} and (a strategy equivalent to) the wavelet strategy \cite{xiao2010differential} can be constructed by levels.  But the $\lbl$ algorithm is not limited to hierarchical strategies because \eat{each level is constructed independently and}there is no constraint that lower level queries are contained in higher level queries.  Further, while levels are constructed initially with coefficients of 0 or 1, the final output strategy may have a complex set of weights on queries, because redundant queries may be added in different levels.  These redundant queries can be combined, as shown in Thm.~\ref{thm:redqueries}, by appropriate weighting.


\algsetup{linenosize=\small}

\begin{algorithm}[t]\label{alg:lbl}
\small
\caption{The Level Selection Algorithm (\lbl)}
\textbf{Input:} A workload matrix $\W$ and the size of each dimension of the domain. An upper bound $k$ to the maximum number of levels.\\
\textbf{Output:} A strategy matrix $\A.$
\begin{algorithmic}[1]
\STATE $\A=\I$;
\REPEAT
	\STATE $\dif\A=[1,1,\ldots, 1]$;
	\REPEAT
		\FOR{each query $\q$ \textbf{in} $\dif\A$}
			\STATE find a pair $(i,j)$ such that split at position $i$ on dimension $j$ such that using query:
			\begin{eqnarray*}
			\q_1&=&\{\mbox{buckets with value $1$ in }\q\mbox{ with}\\
			 &&\mbox{$i$-th dimension less than $j$}\}\\
			\q_2&=&\{\mbox{buckets with value $1$ in }\q\mbox{ with}\\
			 &&\mbox{$i$-th dimension greater than or equal to $j$}\}
			\end{eqnarray*}
to take the place of query $\q$ in $\dif\A$ and\\
$\A'=\left[\begin{smallmatrix}\A\\ \dif\A\end{smallmatrix}\right]$ minimizes
$\error{\A'}{\W}$.
		\ENDFOR
		\STATE \textbf{if} $\error{\A'}{\W}$ is reduced \textbf{then} update $\dif\A$,
	\UNTIL{$\dif\A$ was not updated;}
	\STATE \textbf{if} $\error{\A'}{\W}\leq \error{\A}{\W}$ \textbf{then} $\A=\A'$.
\UNTIL{$\A$ was not updated or $||\A||_2^2 \geq k$;}
\RETURN$\A$.
\end{algorithmic}
\end{algorithm}
\begin{example}
For $\W=\allrange(512)$, the $\lbl$ algorithm terminates at $33$ levels.  The output strategy consists of $4746$ queries, reducible to $1931$ non-redundant queries. 
\end{example}
\subsection{Complexity of the $\lbl$ algorithm}
Note that Program~\ref{alg:lbl} can use $\W^T\W$ instead of $\W$ as input without impacting any computation or the final output. This implies that equivalent workloads produce equivalent outputs, and also that the complexity of Program~\ref{alg:lbl} does not depend on the number of queries in $\W$.  

In each iteration, the algorithm needs to find the split point which maximizes the reduction of total error.  This occurs in Step 6 and it is the most time-consuming part of Prog.~\ref{alg:lbl}.  The total error for each possible split has to be computed, which means computing the total error $O(n)$ times to determine one split point. There are at most $n$ split points so the total error is computed at most $O(n^2)$ over all the iterations.  Each total error computation requires recomputing $\inv{(\A'^T\A')}$ for an updated strategy $\A'$.  This requires $O(n^3)$ time by ordinary matrix inversion, so the entire algorithm would take $O(kn^5)$ time, where the number of levels is bounded by $k$.  

However, because only three rows  of $\dif\A$ are updated during each iteration, we can improve the running time by exploiting a more efficient incremental computation of the matrix inverse.  Details of this technique and the proof of the following theorem are included in App.~\ref{app:lbl}.

\begin{restatable}{theorem}{thmlblcost}\label{thm:lblcost}
The $\lbl$ algorithm can be implemented in $O(kn^4)$ time, where $k$ is the maximum number of levels.
\end{restatable}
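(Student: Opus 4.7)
The plan is to use the cost formula of Prop.~\ref{prop:totalerror} together with an incremental maintenance of the inverse $(\A^T\A)^{-1}$, so that evaluating any candidate split costs $O(n^2)$ rather than $O(n^3)$.

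First I would make two reductions. (1) Precompute $\W^T\W$ once in $O(mn^2)$ time (or treat it as an $n\times n$ input), so that throughout the run we only interact with $\W$ through this $n \times n$ matrix. (2) Observe that within the construction of a single level, the sensitivity $\|\A\|_2^2$ is invariant: each split replaces a single row $\q$ of $\dif\A$ by two rows $\q_1,\q_2$ whose $1$-supports partition the $1$-support of $\q$, so every column of $\dif\A$ keeps exactly one $1$ and $\|\A'\|_2^2$ is constant during the inner loop. Hence the only quantity that must be tracked while evaluating candidate splits is $T := \tr\bigl(\W^T\W\,(\A^T\A)^{-1}\bigr)$.

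Next I would maintain as invariants both $M := (\A^T\A)^{-1}$ and the scalar $T$. A candidate split changes $\A^T\A$ by the symmetric matrix
\[
\Delta \;=\; \q_1^T\q_1 + \q_2^T\q_2 - \q^T\q,
\]
which has rank at most $3$, and can be written as a composition of three rank-$1$ updates $\A^T\A \mapsto \A^T\A \pm uu^T$. For each such rank-$1$ update, the Sherman--Morrison formula gives a closed-form expression for the change in $T$,
\[
\Delta T \;=\; \mp\,\frac{u^T M\,\W^T\W\,M u}{1 \pm u^T M u},
\]
which is evaluable in $O(n^2)$ time by forming $Mu$, then $\W^T\W(Mu)$, and finally the inner product with $u$. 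So each candidate split can be scored in $O(n^2)$ time \emph{without committing} to the update, which is essential since the inner loop examines many candidates before selecting one.

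To finish the count, I would invoke the bound already argued in the text: over one pass of Step~6 the algorithm evaluates $O(n^2)$ candidate positions in total ($O(n)$ splits for each of $O(n)$ rows of $\dif\A$), each at cost $O(n^2)$, for $O(n^4)$ per level. Committing the winning split requires one $O(n^2)$ Sherman--Morrison update to $M$ and $T$, which is absorbed in the same bound. Finally, after completing a level, the new rows of $\dif\A$ are appended to $\A$ and the sensitivity $\|\A\|_2^2$ increments by one, which can be folded in in $O(n^2)$. Summing over the at most $k$ levels yields $O(kn^4)$.

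The main obstacle is verifying that the update machinery behaves correctly under the accept/reject pattern of the inner loop; the nontrivial point is using the Sherman--Morrison identity to compute the \emph{difference} in $T$ from a tentative rank-$3$ perturbation $\Delta$ using only the current $M$, so that rejected candidates incur no rollback cost. A secondary issue is numerical and definitional: one must check that $\A'$ remains full rank after each split so that $(\A'^T\A')^{-1}$ exists (it does, because each new level is itself column-uniform with disjoint row supports, hence $\dif\A$ has full column rank and appending it can only increase the rank of $\A$).
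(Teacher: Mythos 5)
Your proposal is correct and follows essentially the same route as the paper: the paper likewise reduces each candidate-split evaluation to $O(n^2)$ by observing that replacing $\q$ with $\q_1,\q_2$ perturbs $\A'^T\A'$ by a rank-$3$ term and applying the Sherman--Morrison--Woodbury identity (with $\mathbf{U},\V$ of size $n\times 3$ and $\LambdaB=\mathrm{diag}(-1,1,1)$) to update $(\A'^T\A')^{-1}$, then multiplies by the $O(n^2)$ error evaluations per level and the bound $k$ on levels to get $O(kn^4)$. The only cosmetic difference is that you chain three rank-$1$ Sherman--Morrison steps (where each successive step must use the intermediate updated inverse, not the original $M$, though this does not change the $O(n^2)$ cost) whereas the paper performs a single rank-$3$ Woodbury update requiring only a $3\times 3$ inversion.
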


In experiments we have run the $\lbl$ algorithm to termination and found that the number of levels is much smaller than $n$ and that it converges quickly.  Fig.~\ref{fig:nlvl2err} shows the convergence of the error as a function of the bound $k$ on the number of levels for $\W=\allrange(512)$.  In addition, although the worst case cost of the $\lbl$ algorithm is $O(kn^4)$, empirically we find that the running time increases approximately with $O(kn^3)$. 

\begin{figure}[t]
\centering
\includegraphics[width=350pt]{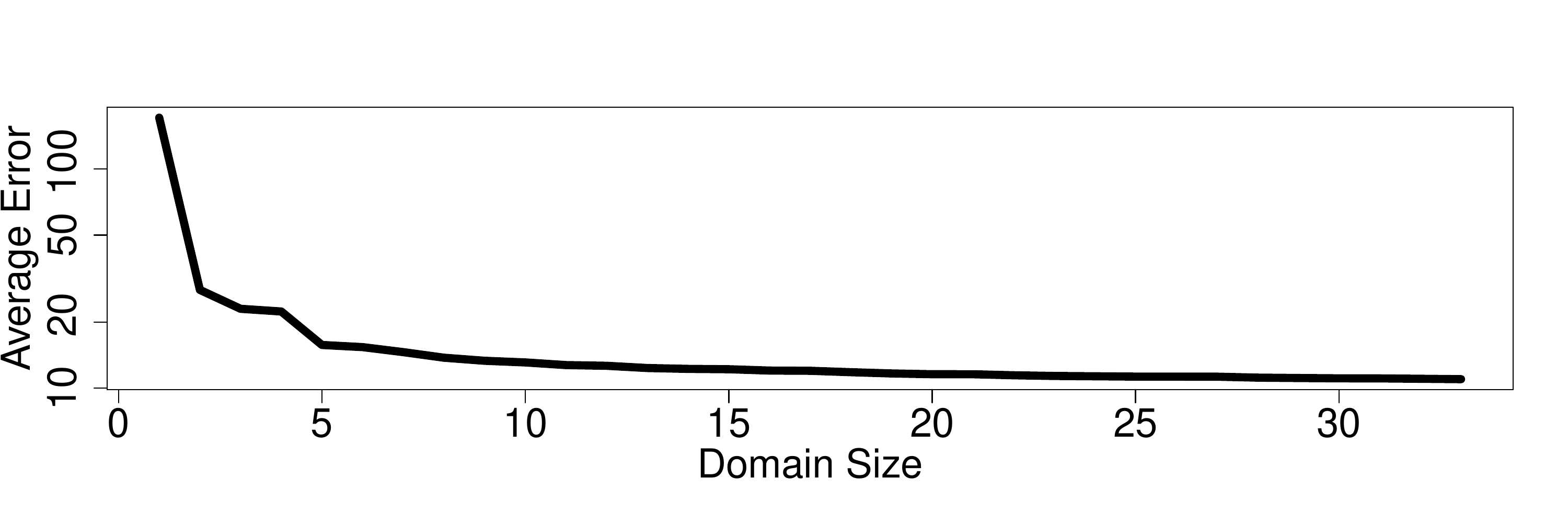}
\caption{The total error of strategies found by $\lbl$ as the bound $k$ on the number of levels increases.}
\label{fig:nlvl2err}
\end{figure}

\subsection{The LSA Algorithm on Large Domains}
In the remainder of this section we present two techniques for efficiently scaling the $\lbl$ algorithm to larger domains. 

\subsubsection{Workload Separation}

The domain for workloads over multi-dimensional data can increase with the product of the attribute domains.  To avoid this, many common workloads, such as those resulting from sets of low-order marginals, can be decomposed so that each dimension can be treated separately.  
This process of {\em workload separation} can significantly improve efficiency with little impact on overall error.  

\begin{definition}
Given a workload $\W$, query $\q$ is called a {\em separating query} for $\W$ if there exists a list of subsets $\W_1, \W_2, \ldots, \W_k$ of $\W$ for which the estimate of $\W_i$ does not involve queries in $\W_1, \ldots, \W_{i-1}, \W_{i+1}, \ldots, \W_k$ when the answer of query $\q$ is fixed. The subsets $\W_1, \W_2, \ldots, \W_k$ are called separated workloads of $\W$ under query $\q$.
\end{definition}

Given a workload $\W$ and one of its separating queries $\q$, notice that  answering queries in each separated workload does not involve other queries in the workload. \eat{; one can answer those separated workloads separately once the separating query $\q$ is answered.} Thus workload separation can be applied in two phases. In the first phase, the separating query $\q$ is answered directly, using the Gaussian mechanism with fixed accuracy. In the second phase, the strategies for each separated workload of $\W$ are computed with the $\lbl$ algorithm.  Since each separated workload contains queries with simpler structure, the buckets that always appear at the same time in one separated workload can be grouped into a single bucket so as to reduce the domain size.

One of the key applications of workload separation is for workloads consisting of sets of marginal queries and predicate queries over marginals, explained in the following theorem.

\begin{restatable}{theorem}{thmtotalsep}\label{thm:totalsep}
Let $\W_1, \ldots, \W_k$ be sets of predicate queries over one-way marginals on $k$ different dimensions. Let $\q_0$ be the sum of all the entries on the domain. With the answer of $\q_0$ given, the estimate of any query in $\W_i$ does not involve queries in $\W_1, \ldots, \W_{i-1}, \W_{i+1}, \ldots, \W_k$.
\end{restatable}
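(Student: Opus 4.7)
My plan is to exploit the Cartesian-product structure of the domain. For each dimension $i$, let $V_i \subseteq \real^n$ denote the subspace of cell vectors that depend only on the $i$-th coordinate; every predicate query on the one-way marginal on dimension $i$ lies in $V_i$, and $\q_0 \in V_i$ for each $i$. The key structural lemma I plan to prove is that the subspaces $V_i \cap \q_0^\perp$ (for $i = 1, \ldots, k$) are pairwise orthogonal. Once this is established, conditioning on $\q_0$'s answer projects the estimation problem into $\q_0^\perp$, where it decomposes into $k$ independent least-squares problems, one per $\W_i$, yielding the separation claim.

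For the orthogonality lemma, pick $u \in V_i \cap \q_0^\perp$ and $v \in V_j \cap \q_0^\perp$ with $i \ne j$, and write $u(x) = f(x_i)$, $v(x) = g(x_j)$. Summing over the remaining coordinates yields
\[
\langle u, v \rangle = \Bigl(\prod_{\ell \ne i,j} n_\ell\Bigr)\Bigl(\sum_{x_i} f(x_i)\Bigr)\Bigl(\sum_{x_j} g(x_j)\Bigr),
\]
and $u \perp \q_0$ forces $\sum_{x_i} f(x_i) = 0$ (symmetrically $\sum_{x_j} g(x_j) = 0$), so $\langle u, v\rangle = 0$. Equivalently, for $\q_a \in \W_i$ and $\q_b \in \W_j$ with $i \ne j$, one obtains the inner-product identity $\q_a^T \q_b = (\q_0^T \q_a)(\q_0^T \q_b)/\|\q_0\|^2$, so the projections of $\q_a$ and $\q_b$ onto $\q_0^\perp$ are orthogonal.

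To finish, I would recast the conditional estimation step. Any query $\w \in \W_i$ splits as $\w = \alpha \q_0 + \tilde{\w}$ with $\tilde{\w} \in V_i \cap \q_0^\perp$, so given the answer to $\q_0$, estimating $\w\x$ is equivalent to estimating $\tilde{\w} \tilde{\x}$, where $\tilde{\x}$ is the projection of $\x$ onto $\q_0^\perp$. Subtracting the now-known $\q_0$-component from each noisy strategy measurement $y_b$ of $\q_b \in \W_\ell$ turns it into an observation of $\tilde{\q}_b \tilde{\x}$ plus independent Gaussian noise, with $\tilde{\q}_b \in V_\ell \cap \q_0^\perp$. Pairwise orthogonality of the $V_\ell \cap \q_0^\perp$ makes the Gram matrix of the transformed strategy block-diagonal across $\ell$, so the least-squares estimate of $\tilde{\w} \tilde{\x}$ uses only the measurements with $\q_b \in \W_i$, which is the separation condition. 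The main obstacle is the bookkeeping for the conditioning step; I plan to handle it cleanly by changing basis $\x \mapsto ((\q_0 \x)/\|\q_0\|,\tilde{\x})$ at the outset, turning the joint regression into two independent regressions, one trivially involving only $\q_0$ and the other being the block-diagonal system above.
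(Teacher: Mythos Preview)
Your proposal is correct and takes a genuinely different route from the paper. The paper's proof is a short span-intersection argument: writing a target query $\q\in\W_i$ as a linear combination of queries from all the $\W_\ell$'s, it observes that the part coming from $\bigcup_{\ell\neq i}\W_\ell$ lies in $\bigcap_{\ell\neq i} V_\ell$ while the remainder lies in $V_i$, and since $V_i\cap V_j=\mathrm{span}(\q_0)$ for $i\neq j$, that cross-dimension part must be a scalar multiple of $\q_0$; with $\q_0$ given exactly, replacing a noisy combination by the exact $\alpha_0\q_0$ can only help, so the other $\W_\ell$'s are never needed. Your approach instead proves the stronger orthogonality statement $(V_i\cap\q_0^{\perp})\perp(V_j\cap\q_0^{\perp})$ by a direct tensor-product computation, and then reads off block-diagonality of the Gram matrix of the projected strategy, which makes the least-squares separation immediate and fully rigorous. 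The paper's argument is shorter and more elementary, but is somewhat informal about why ``replacing by the exact $\q_0$ answer'' coincides with what the least-squares estimator actually does; your block-diagonal Gram matrix makes that step explicit and also yields the useful identity $\q_a^T\q_b=(\q_0^T\q_a)(\q_0^T\q_b)/\|\q_0\|^2$, which quantifies exactly how the cross-blocks vanish.
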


More generally, when a workload contains predicate queries over up to $k$-way marginals, such as datacube queries, workload separation can then be applied recursively, reducing the problem to low- or even one-dimensional problems.

\begin{example} 
Consider a query workload consisting of all one-dimensional range queries over each of two dimensions, with domains of size $n_1$ and $n_2$.  Such a workload would typically be represented over a set of cell counts of size $n_1 \times n_2$, and the $\lbl$ algorithm would take $O(k(n_1n_2)^4)$ time.  The separated workloads are $\allrange(n_1)$ and $\allrange(n_2)$ and the cost of running $\lbl$ is reduced to $O(k(n_1^4+n_2^4))$.
\end{example}

\subsubsection{Workload Generalization}

As a second optimization, called workload generalization, we initially merge cells in the domain, creating an approximation of the original workload over a much smaller domain.  The $\lbl$ algorithm is executed on this generalized workload to produce an initial generalized strategy.  Then, in a second phase, we run a slightly modified Program~\ref{alg:lbl} over each merged cell.  The modified algorithm computes error and sensitivity with the strategy of the first phase taken into account. The strategies on both phases are then combined to produce the final strategy.




The cost of this method is a function of domain size $n$ and the generalized domain size $m$. The first phase runs Prog~\ref{alg:lbl} once over domain size $m$ and the second phase runs modified Prog~\ref{alg:lbl} $m$ times over domain size $n/m$. Thus, the total cost of workload generalization is $O(k(m^4+{n^4}/{m^3}))$. If $m$ is set to $O(n^{1/3})$, then the total cost can be reduced to $O(kn^{3})$.  This is the asymptotic cost as matrix multiplication, thus it is a natural lower bound for the running time of an algorithm of this form.  Workload generalization results in much faster execution time, but sacrifices solution quality in comparison the $\lbl$ algorithm, resulting in modestly higher error.  We evaluate this experimentally in the next section.

\section{Experimental Evaluation}

There are two parts to the experimental evaluation of our techniques.  First, we study the complexity of different workloads using the singular value bound.  Second, we evaluate the quality of strategies generated by the $\lbl$ algorithm, as well as the effectiveness of the workload separation and generalization techniques.  Recall from Sec. \ref{sec:sub:error} that the error of the matrix mechanism, and therefore the choice of strategy, is independent of the database instance.  As a consequence, the results that follow do not use an input database---they are purely an analysis of workloads and the error rates possible in answering these workloads.  

\subsection{Workload Complexity}
We can use the singular value bound to gain insight into the complexity of workloads---the essential hardness of answering a set of queries under the matrix mechanism.  We begin by reporting the SVD bound for all range workloads on five different domains, all with a domain size $1024$ but different number of dimensions.
\begin{figure}[t]
	\centering
	\includegraphics[width=450pt]{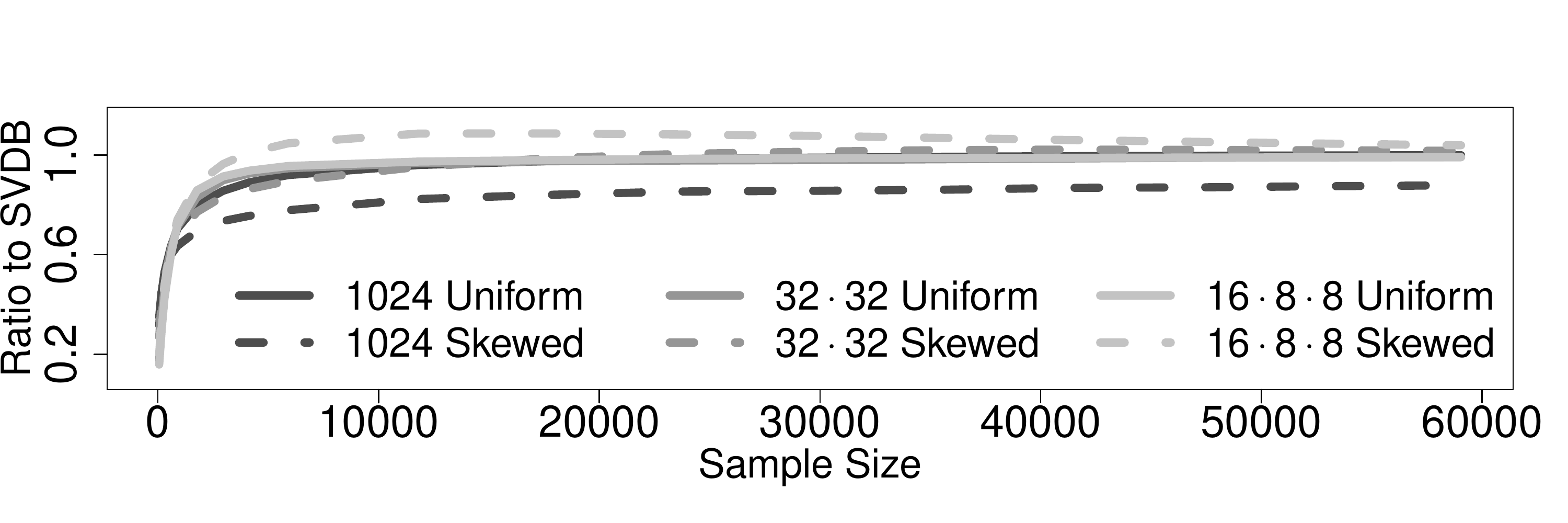}
	\caption{The SVD Bound for Sampled Workloads}
	\label{fig:sample}
\end{figure}
The comparison is shown as the lightest grey bars in {\bf Figure~\ref{fig:svd_lbl_h}}, where $2^{10}$ means the 10-dimensional domain with each dimension size equal to 2. As the number of dimensions increases with the overall domain fixed, the total number of range queries decreases, leading to lower complexity workloads, as shown by the decreasing SVD bound in the figure.

Our next experiment, shown in {\bf Figure~\ref{fig:sample}}, samples a matching number of distinct queries from each workload and considers the average error of the sampled query sets.  The complexity of a sampled query set is measured by the ratio between average error of the sample and the average error of the complete query set.  Moreover, the experiment uses both uniform sampling and a biased sampling method\footnote{\small Biased sampling chooses queries whose center is far away from the center of the domain, and the probability for choosing a range decreases exponentially with the distance between the center of the query and the center of the domain. The motivation of having this sampling method is to create a way of sampling that has different properties with uniform sampling, which can also be viewed as a process where users whose most interested region is at the extreme of the domain so that they tend to generate queries that are far away from the center.} to study whether the SVD bound is related to the sampling method.
\begin{figure*}[t]
	\centering
\subfigure[\small The $\lbl$ Algorithm]{
	\includegraphics[width=170pt]{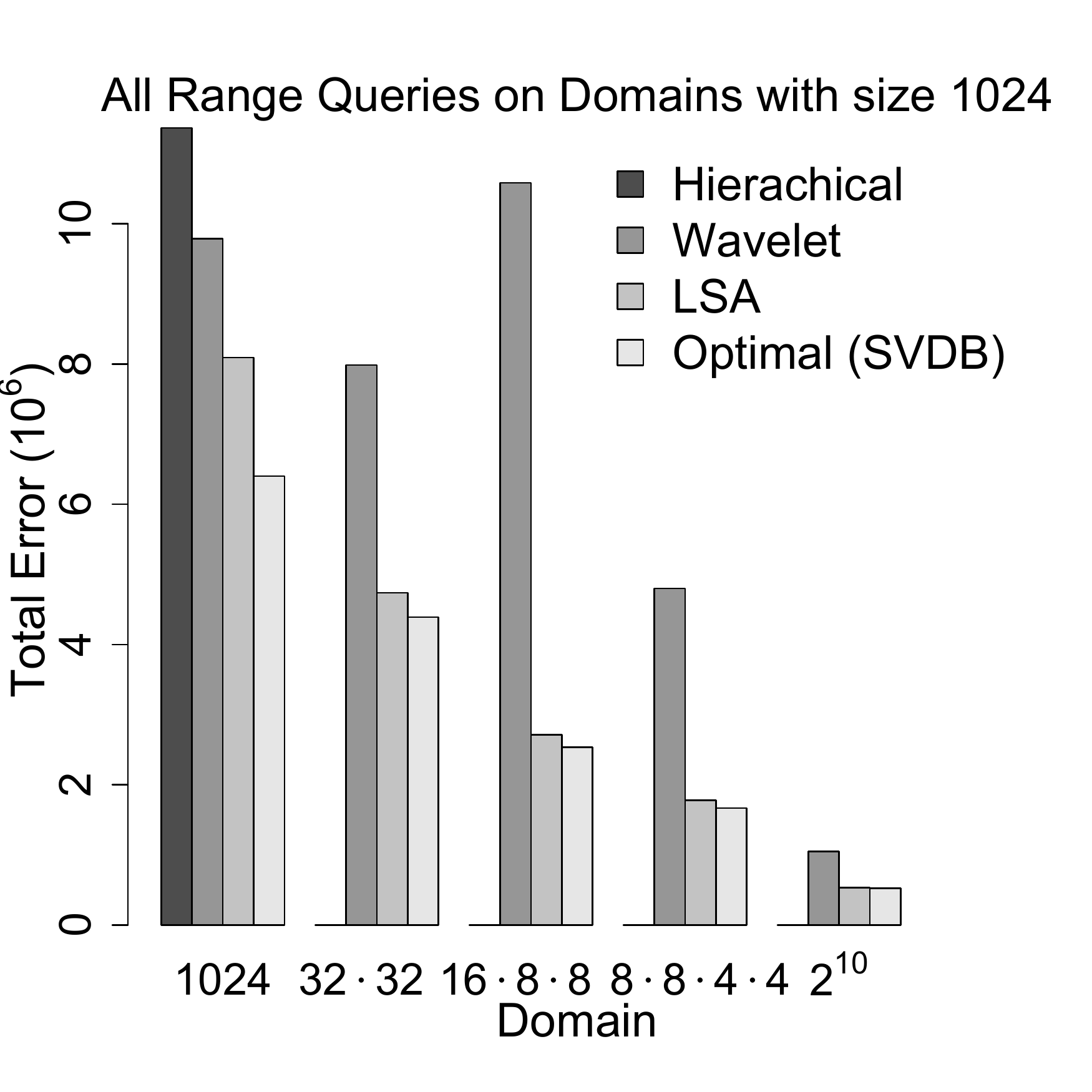}
	\label{fig:svd_lbl_h}}
\subfigure[\small Random Sampled Queries]{
	\includegraphics[width=170pt]{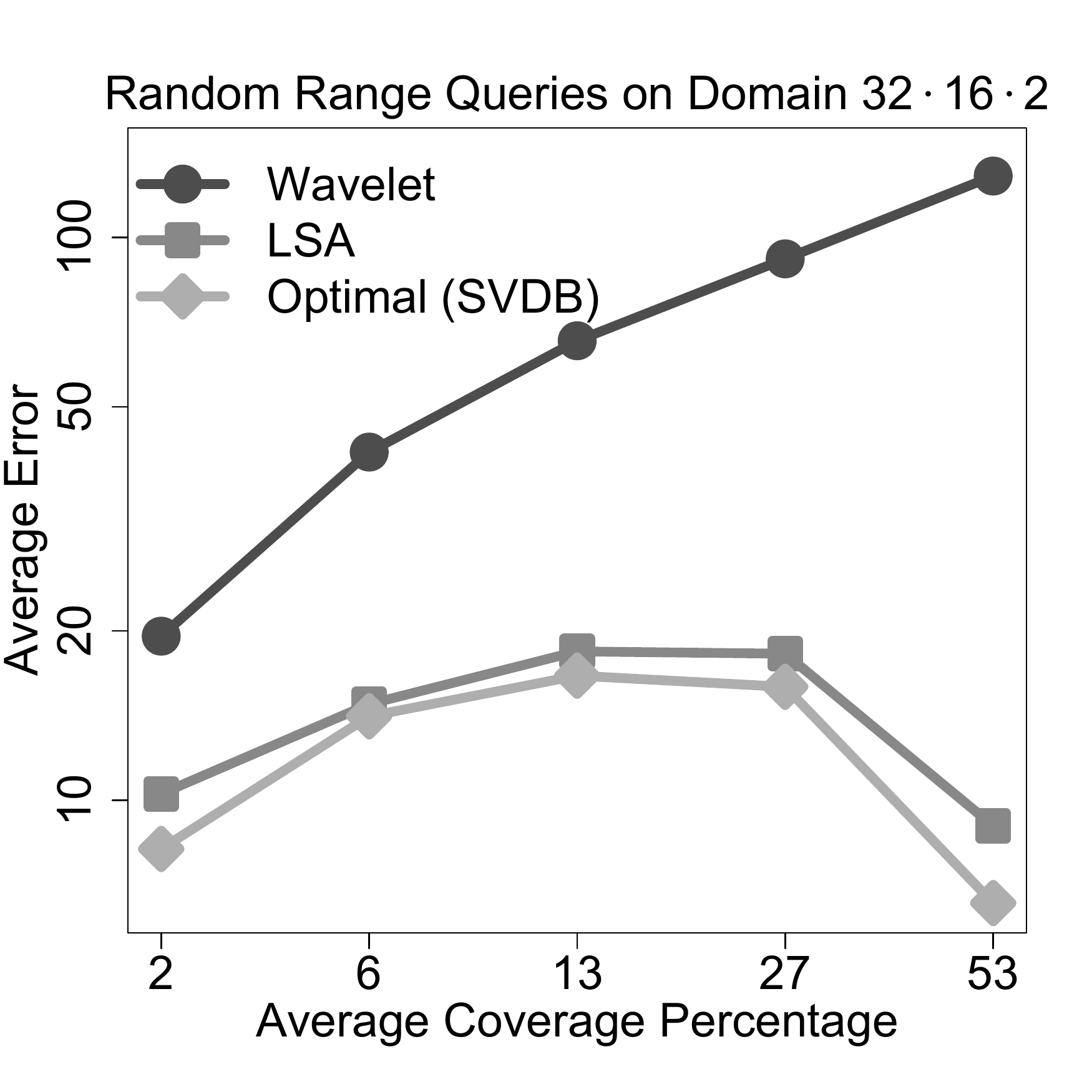}
	\label{fig:privsample}}\\
\subfigure[\small Workload Separation]{
	\includegraphics[width=170pt]{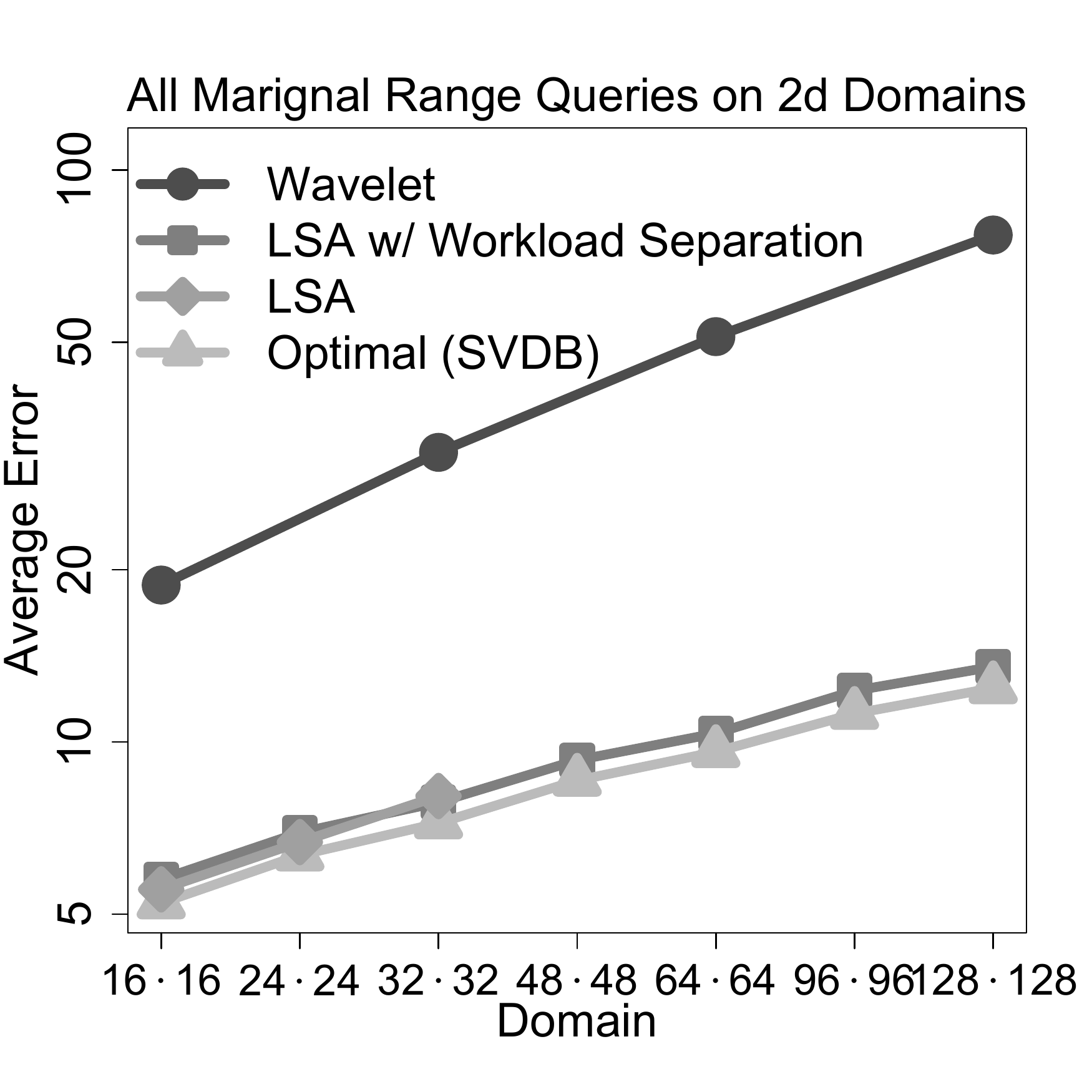}
	\label{fig:margin}}
\subfigure[\small Workload Generalization]{
	\includegraphics[width=170pt]{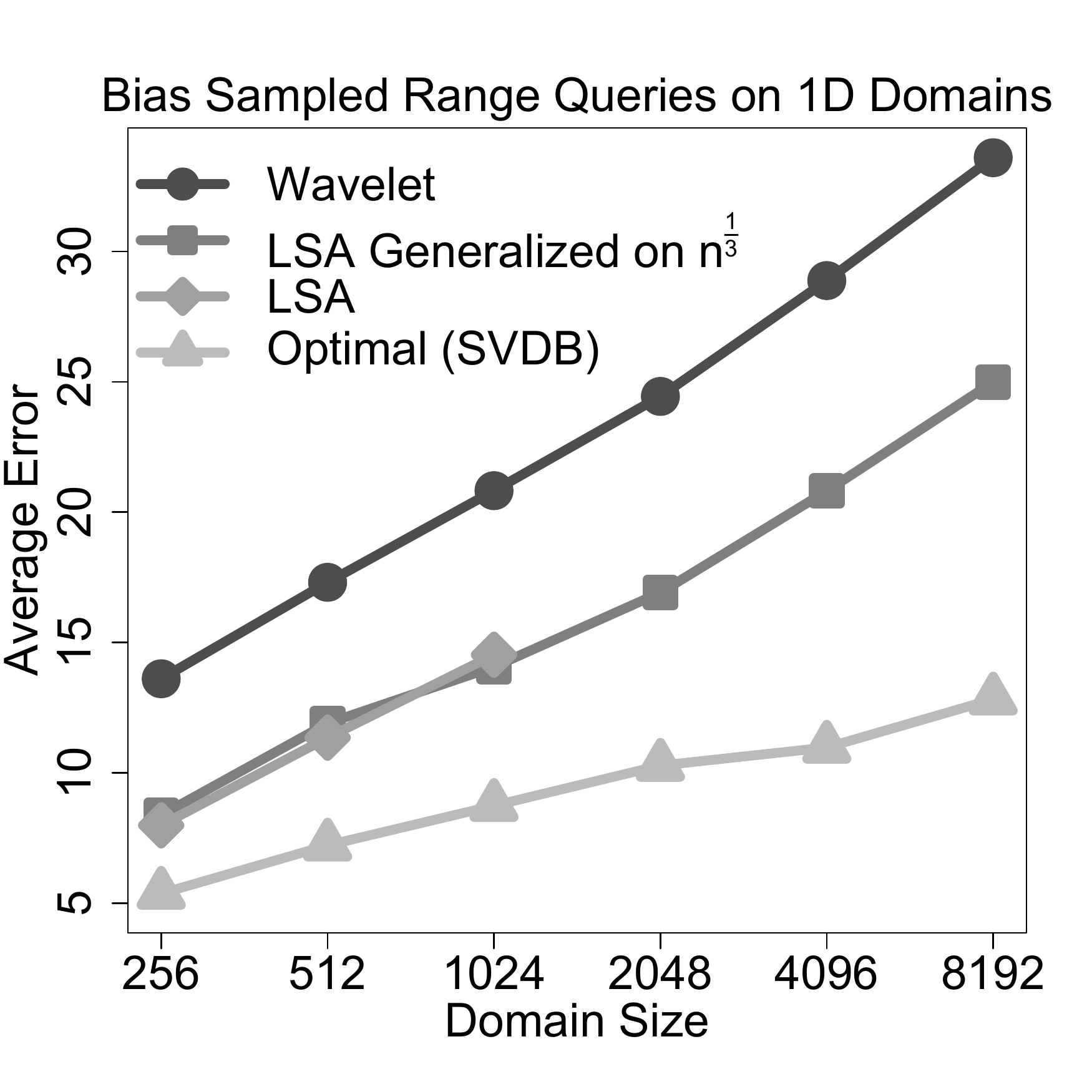}
	\label{fig:expand}}	
\caption{Performance of the $\lbl$ algorithm and auxiliary techniques.}
\end{figure*}
 
{\bf Figure \ref{fig:sample}} shows that even very small workloads of uniformly sampled queries (10000 queries, or roughly 2-6\% of the entire workload of range queries) reach the error levels of the entire workload.  When biased sampling is used to construct workloads, the error ratio approaches $1$ more slowly.  Overall, this may have important consequences for the strategy selection problem. It suggests that, for workloads of just modest size, there is no error penalty in adopting a query strategy tuned to a larger regular workload such as $\allrange$ for the appropriate domain size.

\subsection{Effectiveness of the $\lbl$ Algorithm}
To assess the effectiveness of the $\lbl$ algorithm we compare the total error under its strategies with other strategies from the literature, and with the optimal error as derived from the SVD bound.  The first experiment, shown in {\bf Figure~\ref{fig:svd_lbl_h}}, explores $\allrange$ workloads over the five multi-dimensional domains mentioned above.  We compare the $\lbl$-computed strategy with the wavelet strategy~\cite{xiao2010differential} and the hierarchical strategy~\cite{Hay:2010Boosting-the-Accuracy}.  Recall that both these strategy matrices were are originally designed for $\epsilon$-differential privacy, but in fact they perform even better under $(\epsilon,\delta)$-differential privacy.  The hierarchical tree algorithm only appears in the first group of comparisons because it is oriented towards one-dimensional queries. 

Both the wavelet and the hierarchical strategies perform well in the one-dimensional case: their error is $1.53$ times the optimal and $1.78$ times the optimal, respectively.  The $\lbl$ algorithm finds a better strategy, $1.26$ times the optimal.  In higher dimensions the performance improvement over wavelet is much larger, demonstrating the importance of adapting the strategy to the workload.  For example, for $\allrange(16,8,8)$, $\lbl$ error is $1.07$ times optimal while wavelet is $4.17$ times optimal.  Overall, the error of $\lbl$ strategies comes very close to the optimal singular value bound in higher dimensions.

To further test the benefits of adapting the strategy to the workload, we considered random workloads of range queries grouped by average coverage percentage, which is the average percentage of cells that are covered by a query. 
As shown in {\bf Figure~\ref{fig:privsample}} (note the logarithmic scale), the $\lbl$ strategies have error rates very close to optimal and outperform the wavelet strategy by more than an order of the magnitude in the most extreme case. 

Next we evaluate the workload separation and workload generalization techniques for improved efficiency of the $\lbl$ algorithm on large domains. {\bf Figure~\ref{fig:margin}} shows the workload separation technique on range queries over pairs of one-way marginals with increasing domain sizes.  Using $\lbl$ with workload separation has a negligible penalty in error: it is almost identical to the original $\lbl$ (and the optimal SVD bound), yet the computational cost is improved by more than three orders of magnitude: On domain $32\cdot32$, the running time is reduced from $5079$ seconds to merely $2.44$ seconds.

To evaluate the workload generalization techniques we considered skew-sampled workloads on one dimension, since their properties differ from $\allrange$ workloads.  {\bf Figure~\ref{fig:expand}} compares the performance of $\lbl$ generalized on domain size $n^{1/3}$ , the original $\lbl$ algorithm, the wavelet strategy, and the singular value bound.  Generalized $\lbl$ performs almost as well as the original $\lbl$ algorithm.  However, the computational cost is decreased significantly: for domain $1024$, the cost is reduced by a factor of 18. 
\section{Related Work}
The matrix mechanism \cite{Li:2010Optimizing-Linear} analyzed in a unified framework two prior techniques for accurately answering range queries.  The first used a wavelet transformation \cite{xiao2010differential}; the second used a hierarchical set of queries followed by inference \cite{Hay:2010Boosting-the-Accuracy}.  The original work on the matrix mechanism focused primarily on $\epsilon$-differential privacy, although $(\epsilon,\delta)$-differential privacy was considered briefly.  In both cases, it was shown that semi-definite programming could be used to compute a minimum error strategy for a workload, but solving such programs is not infeasible.  The present work provides tractable methods for computing low-error strategies for any given workload.

Workload complexity under $\epsilon$-diff\-erential privacy has been studied before. Hardt and Talwar~\cite{Hardt:2010On-the-Geometry-of-Differential} present a lower bound for randomly generated predicate workloads.  More generally, Blum et al.~\cite{blum2008a-learning} show that workload complexity is related to the VC dimension of the workload.  This measure is not directly comparable to our singular value bound since the former concerns $\epsilon$-differential privacy.  Also, the VC dimension of any one-dimensional range workload is constant, while our measure captures more detailed differences in workloads.

Put in our terms, Xiao et al.~\cite{xiaodifferentially} propose a method for computing a strategy matrix using KD-trees and private accesses to the database. Their query answers could be improved and made consistent by employing the matrix mechanism.  Roth and Roughgarden~\cite{Roth:2010The-Median-Mechanism:} describe a data-dependent mechanism to answer predicate queries on databases with 0-1 entries. Hardt et. al~\cite{hardt2010multiplicative} provide a linear time algorithm for the same query and database setting.  The trade-off between accuracy and efficiency among data-dependent and data-independent methods deserves further investigation.

\section{Conclusion}

Standard differentially-private mechanisms for answering a workload of queries require noise determined by the sensitivity of the queries.  We have shown that it is possible to satisfy the privacy condition with substantially less noise, and that the noise required is related to the spectral properties workload.  Our methods allow the privacy mechanism to be efficiently adapted to the workload, achieving error improvements of as much as one order of magnitude over prior techniques.


\bibliographystyle{abbrv} 
{ 
\bibliography{paper} 
}

\newpage
\appendix


\section{Data model and Domain} \label{app:datamodel}

In this section we provide a brief concrete example to explain how a workload of counting queries can be expressed as a set of linear queries over a vector of cell counts $\x$.  Consider the following relational schema describing students:
$$R=(name, gradyear, gender, gpa)$$
and suppose $dom(gradyear)=\{2011,2012,2013,2014\}$ and $dom(gender)=\{M,F\}$.  
If the desired workload consists of statistics about the gender of students 
graduating in specified years, then we can define the cells of $\x$ as the crossproduct of $dom(gradyear)$ and $dom(gender)$, which has size $n=8$.  That is, 
{\small $$\x=[ cnt(2011,M), cnt(2011,F), \dots cnt(2014,M), cnt(2014,F)]$$}
Then the following matrix represents a workload of five queries:
\begin{eqnarray*}
\W & = & 
\begin{bmatrix}
1 & 1 & 1 & 1 & 1 & 1 & 1 & 1 \\
1 & 1 & 1 & 1 & 0 & 0 & 0 & 0 \\
0 & 1 & 0 & 1 & 0 & 0 & 0 & 0 \\
1 & 0 & 1 & 0 & 0 & 0 & 0 & 0 \\
0 & 0 & 0 & 0 & 1 & 1 & \mbox{-}1 & \mbox{-}1 \\
\end{bmatrix}
\end{eqnarray*}
By rows, the queries represented by $\W$ are: $Q_1$: the count of all students; 
$Q_2$: the count of students with $gradyear \in [2011,2012]$;
$Q_3$: the count of female students with $gradyear \in [2011,2012]$;
$Q_4$: the count of male students with $gradyear \in [2011,2012]$;
$Q_5$: the difference between 2013 grads and 2014 grads.

Note that each query of interest should be listed in the workload.  We do {\em not} omit queries whose answers could be calculated from other queries in the workload.  For example, $Q_2$ is included in the workload even though it could be computed by summing $Q_3$ and $Q_4$.  Noise accumulates when summing noisy query answers, so we include each desired query and minimize the total error. Also, the relative accuracy of a query in the workload can be controlled by linearly scaling its row by a coefficient.

For this domain, the workload of all range queries is written $\allrange(4,2)$ and consists of all two dimensional range queries over $gradyear$ and $gender$, where the possible ``ranges'' for $gender$ are simply $M$, $F$, or $(M \vee F)$.  This workload consists of $30$ queries.  The workload $\allpred(8)$ includes all $2^8$ linear queries expressed over $\x$ with coefficients in $\{0,1\}$.

Also note that for the workload $\W$ above, the pair of elements $x_5,x_6$ and $x_7,x_8$ always appear together.  The included queries do not distinguish between males and females in year 2013, or in year 2014.  For this reason, each pair of variables could be replaced by a single variable, reducing the vector $\x$ to size 6.

\section{Comparison of the {\large $\epsilon$}- and  {\large $(\epsilon,\delta)$}-Matrix Mechanisms} \label{app:l1l2}

In this appendix we consider how our main results, which apply to approximate differential privacy, compare to corresponding results under standard differential privacy.  We begin with definitions for the $\epsilon$-matrix mechanism in Sec. \ref{app:l1l2:def}.  Then, in Sec. \ref{app:l1l2:diff}, we describe key differences that make the strategy selection problem harder under $\epsilon$-differential privacy.   While the privacy guarantees of the two mechanisms are formally distinct, for conservative settings of $\delta$, one may be indifferent to the two guarantees and consider which mechanism offers lower error for a fixed $\epsilon$.  In Sec. \ref{app:l1l2:error} we show that the error of the $(\epsilon,\delta)$-matrix mechanism is favorable for a wide range of strategies including those considered in this paper.

\subsection{Definitions, {\large $\epsilon$}-Matrix Mechanism} \label{app:l1l2:def}

Standard differential privacy is defined as follows:

\begin{definition}[Differential Privacy] A randomized algorithm $\alg$ is $\epsilon$-differentially private if for any instance $I$, any $I' \in \nbrs(I)$, and any subset of outputs $S \subseteq Range(\alg)$, the following holds:
\[
Pr[ \alg(I) \in S] \leq \exp(\epsilon) \times Pr[ \alg(I') \in S],
\]		
	\end{definition}

Under $\epsilon$-differential privacy, query sensitivity is measured using the $L_1$ distance.  For a query matrix $\W$, the $L_1$ sensitivity is the maximum $L_1$ norm of the columns of $\W$.  

\begin{proposition}[$L_1$ Query matrix sensitivity]
The $L_1$ sensitivity of a query matrix $\W$ is denoted $\Lone{\W}$ and is defined as follows:
\begin{eqnarray*}
\Lone{\W} & \eqbydef & \max_{\x' \in \nbrs(x)} \Lone{\W\x - \W\x'} 
			= \max_{W_i \in \cols(\W)} \Lone{W_i} \\
\end{eqnarray*}
\end{proposition}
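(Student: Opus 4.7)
The plan is to prove the equality
\[
\max_{\x' \in \nbrs(\x)} \Lone{\W\x - \W\x'} = \max_{W_i \in \cols(\W)} \Lone{W_i}
\]
by directly unpacking the definition of neighboring databases in the cell-count representation and then recognizing that the matrix-vector product $\W(\x-\x')$ literally selects a column of $\W$.

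First I would recall the characterization, stated just before the $L_2$ proposition, of the neighbor relation at the level of cell counts: because the cells partition $dom(\aa)$ and neighboring instances $I,I'$ differ in exactly one tuple, the corresponding count vectors $\x,\x'$ differ in exactly one coordinate, by exactly one. Hence for every $\x'\in\nbrs(\x)$ there exists an index $i\in\{1,\dots,n\}$ and a sign $s\in\{-1,+1\}$ such that $\x-\x' = s\,e_i$, where $e_i$ is the $i$-th standard basis vector. Conversely, every such $s\,e_i$ can be realized by some neighboring $\x'$ (so long as $\x'$ remains a valid count vector, which is the case by choosing the sign appropriately for each $\x$, and in any case the supremum over neighbors is unaffected).

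Second, I would compute the difference of the query answer vectors. By linearity, $\W\x - \W\x' = \W(\x-\x') = s\,\W e_i = s\,W_i$, the $i$-th column of $\W$ up to sign. Taking $L_1$ norms and using $\Lone{-v}=\Lone{v}$ yields
\[
\Lone{\W\x - \W\x'} \;=\; \Lone{W_i}.
\]
Taking the maximum over all neighbors $\x'\in\nbrs(\x)$ ranges $i$ over all of $\{1,\dots,n\}$, so the left-hand side equals $\max_i \Lone{W_i} = \max_{W_i\in\cols(\W)} \Lone{W_i}$, establishing the claimed equality.

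There is no genuine obstacle here: the statement is essentially definitional once the neighbor relation has been translated from databases to cell-count vectors, and the rest is a one-line application of linearity. The only subtlety worth flagging is that the maximum over neighbors is really a supremum that is attained: one must check that for any index $i$ there is some cell-count vector $\x$ with a neighbor $\x'$ such that $\x-\x' = \pm e_i$, which is immediate since we are free to consider any database instance. This parallels exactly the argument already invoked (without proof) for the $L_2$ version in Proposition 2.1, with $\Ltwo{\,\cdot\,}$ replaced by $\Lone{\,\cdot\,}$ throughout.
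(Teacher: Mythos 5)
Your argument is correct and coincides with the paper's own (implicit) justification: the paper states this proposition without a separate proof, relying on exactly the observation you use, namely that neighboring instances yield count vectors $\x,\x'$ with $\x-\x'=\pm e_i$, so $\W\x-\W\x'$ is (up to sign) a column of $\W$ and the sensitivity is the maximum column norm. Nothing is missing; your remark that the supremum over neighbors is attained for every column index is a reasonable extra detail the paper leaves unstated.
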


The standard mechanism for achieving $\epsilon$-differential privacy adds Laplace noise calibrated to the $L_1$ sensitivity.  We use $\Lap(b)^m$ to denote a column vector consisting of $m$ independent samples drawn from a Laplace distribution with mean $0$ and scale $b$.  

\begin{proposition}[Laplace mechanism] \label{prop:laplace}
Given an $m \times n$ query matrix $\W$, the randomized algorithm $\LM$ that outputs the following vector is $\epsilon$-differentially private:
$$\LM(\W,\x) = \W\x + \Lap(\frac{\Lone{\W}}{\epsilon})^m$$
\end{proposition}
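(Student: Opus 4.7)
The plan is to verify the $\epsilon$-differential privacy guarantee directly from the definition by bounding the ratio of output densities for any two neighboring instances. Fix any $I$ with corresponding vector $\x$, any $I' \in \nbrs(I)$ with corresponding vector $\x'$, and any candidate output $y \in \real^m$. Because the noise vector $\Lap(b)^m$ with $b=\Lone{\W}/\epsilon$ consists of $m$ \emph{independent} Laplace samples, the density of $\LM(\W,\x)$ at $y$ factorizes as $\prod_{i=1}^m \frac{1}{2b}\exp(-|y_i-(\W\x)_i|/b)$, and similarly for $\x'$.

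Next I would form the pointwise ratio of these densities. The $1/(2b)$ normalizers cancel and the exponent collapses to $\frac{1}{b}\sum_{i=1}^m \bigl(|y_i-(\W\x')_i| - |y_i-(\W\x)_i|\bigr)$. Applying the reverse triangle inequality coordinate-wise bounds each summand by $|(\W\x)_i - (\W\x')_i|$, so the ratio is at most $\exp\!\bigl(\Lone{\W\x - \W\x'}/b\bigr) = \exp\!\bigl(\Lone{\W(\x-\x')}/b\bigr)$. This converts the statement from one about neighboring databases to one about the column structure of $\W$, which is exactly what the $L_1$ sensitivity controls.

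Now I would invoke the data model: since $I' \in \nbrs(I)$ and the cells partition $dom(\aa)$, the vectors $\x$ and $\x'$ differ in exactly one coordinate, say coordinate $j$, by exactly $\pm 1$. Hence $\W(\x-\x')$ equals $\pm W_j$, the $j$-th column of $\W$, and $\Lone{\W(\x-\x')} = \Lone{W_j} \leq \max_{W_i \in \cols(\W)} \Lone{W_i} = \Lone{\W}$ by the proposition on $L_1$ query matrix sensitivity. Substituting $b = \Lone{\W}/\epsilon$ yields a density ratio bounded by $\exp(\epsilon)$. Integrating this pointwise bound over any measurable $S \subseteq \real^m$ gives $\Pr[\LM(\W,\x)\in S] \leq \exp(\epsilon)\,\Pr[\LM(\W,\x')\in S]$, which is the definition of $\epsilon$-differential privacy.

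There is no real obstacle here; this is a textbook adaptation of the Laplace mechanism of Dwork et al.\ to the matrix formulation. The only mildly delicate point is cleanly connecting the data-model notion of neighbors ($I'\in\nbrs(I)$) to the vector notion ($\x'\in\nbrs(\x)$ differing by one in one component) so that the column-norm characterization of $\Lone{\W}$ applies verbatim; once that bookkeeping is done, the proof is a one-line application of the triangle inequality combined with independence of the Laplace coordinates.
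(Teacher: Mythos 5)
Your proof is correct and is precisely the standard argument behind the Laplace mechanism: factor the product density, bound the pointwise density ratio via the triangle inequality by $\Lone{\W(\x-\x')}/b$, reduce to the column-norm characterization of $\Lone{\W}$ using the fact that neighboring instances change one cell count by one, and integrate over $S$. The paper does not prove this proposition at all---it states it as the classical mechanism of Dwork et al.\ adapted to the matrix notation---so your write-up simply supplies the same textbook proof the cited work relies on, with the neighbor-to-vector bookkeeping handled correctly.
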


The matrix mechanism is defined almost identically to Prop. \ref{thm:l2diffpriv}, but with Laplace noise in place of Gaussian noise.

\begin{proposition}{\sc ($\epsilon$-Matrix Mechanism \cite{Li:2010Optimizing-Linear})} \label{def:e-m-mech} 
Let $\A$ be a full rank $m \times n$ strategy matrix and let $\W$ be any $p \times n$ workload matrix. Then the randomized algorithm $\MM_\A$ that outputs the following vector is $\epsilon$-differentially private:
\begin{eqnarray*}
\MM_\A(\W,\x) &=& \W\x + \W \A^\plus \Lap(b)^m.
\end{eqnarray*}
where $b=\Lone{\A}/\epsilon$
\end{proposition}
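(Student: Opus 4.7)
The plan is to prove the proposition by exhibiting $\MM_\A(\W,\x)$ as a post-processing of the Laplace mechanism (Prop.~\ref{prop:laplace}) applied directly to the strategy $\A$, and then invoking the standard fact that post-processing preserves differential privacy.

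First, I would set up the key algebraic identity. Because $\A$ is a full-rank $m\times n$ strategy matrix (so $m\geq n$ and $\rank(\A)=n$), its Moore--Penrose pseudo-inverse $\A^\plus = (\A^T\A)^{-1}\A^T$ is a genuine left inverse, giving $\A^\plus\A = \I_n$. Consequently $\W\A^\plus\A = \W$, which is exactly the relation needed to recover the true answer vector $\W\x$ from $\A\x$ after the linear transformation.

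Next, I would invoke Prop.~\ref{prop:laplace} on the strategy $\A$ itself with privacy parameter $\epsilon$. This produces the random vector $\LM(\A,\x) = \A\x + \Lap(\Lone{\A}/\epsilon)^m$, which is $\epsilon$-differentially private by construction because the Laplace noise is scaled by precisely the $L_1$ sensitivity of $\A$ divided by $\epsilon$. I would then apply the deterministic, data-independent linear map $\v\mapsto \W\A^\plus\v$ to this output. Writing $b=\Lone{\A}/\epsilon$ and using the identity above, a one-line computation gives
\[ \W\A^\plus\,\LM(\A,\x) \;=\; \W\A^\plus\A\x + \W\A^\plus\Lap(b)^m \;=\; \W\x + \W\A^\plus\Lap(b)^m \;=\; \MM_\A(\W,\x), \]
so $\MM_\A(\W,\x)$ is realized as the post-processing of an $\epsilon$-differentially private mechanism by a map that does not touch $\x$.

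The final step is to appeal to the post-processing property of differential privacy: if $\alg$ is $\epsilon$-differentially private and $f$ is any (possibly randomized) function that does not access the database, then $f\circ \alg$ is also $\epsilon$-differentially private. Since $\W\A^\plus$ is a fixed matrix depending only on the public workload $\W$ and strategy $\A$, this property applies immediately and completes the argument. I do not expect any serious obstacle here: the single nontrivial ingredient is confirming $\A^\plus\A = \I_n$, which is exactly what the full-rank hypothesis guarantees and is the essential reason that assumption appears in the statement.
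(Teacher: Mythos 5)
Your proposal is correct and follows essentially the same route the paper takes: the paper (and the cited work \cite{Li:2010Optimizing-Linear}) justifies the mechanism as submitting the strategy $\A$ to the Laplace mechanism, forming the least-squares estimate $\estx=\A^\plus(\A\x+\Lap(b)^m)$, and releasing $\W\estx$, which is precisely your post-processing decomposition using $\A^\plus\A=\I$. No gaps; the full-rank hypothesis is used exactly where you say it is.
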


The analysis of error for the $\epsilon$-matrix mechanism differs only in the sensitivity and $\epsilon$ terms:

\begin{proposition}{\sc (Total Error)} \label{prop:totalerrorL1}
Given a workload $\W$, the total error of answering $\W$ using the $\epsilon$ matrix mechanism with query strategy $\A$ is:
\begin{equation}\label{eqn:totalerrorL1}
 \error\A{\W} = \frac{2}{\epsilon^2}|| \A ||_1^2 \;\tr (\W^T\W(\A^T\A)^{-1})
 \end{equation}
\end{proposition}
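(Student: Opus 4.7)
The plan is to essentially mirror the derivation implicit in Prop.~\ref{prop:totalerror}, but with Laplace noise in place of Gaussian noise. The only substantive change is that $\var(\Lap(b)) = 2b^2$ rather than $\sigma^2$, and the sensitivity term is the $L_1$ column norm instead of $L_2$. Everything else, including the algebraic manipulation of $\W\A^\plus$, is formally identical.

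First I would write out the per-query error. By Prop.~\ref{def:e-m-mech}, the estimate for a single workload query $\w_i$ (the $i$th row of $\W$) is $\w_i\x + (\w_i \A^\plus)\,\Lap(b)^m$, so its mean squared error equals $\E\bigl[((\w_i\A^\plus)\,\Lap(b)^m)^2\bigr]$. Since the Laplace samples are independent with variance $2b^2$, this expectation expands to $2b^2\sum_{j}(\w_i\A^\plus)_j^2$, i.e.\ $2b^2$ times the squared $L_2$ norm of row $i$ of $\W\A^\plus$.

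Next, I would sum over all rows of $\W$. The result is $2b^2$ times the squared Frobenius norm $\Vert \W\A^\plus\Vert_F^2 = \tr\bigl((\W\A^\plus)(\W\A^\plus)^T\bigr)$. Substituting $b = \Lone{\A}/\epsilon$ gives the prefactor $2\Lone{\A}^2/\epsilon^2$, which already matches the claimed form.

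The only step requiring care is simplifying $(\W\A^\plus)(\W\A^\plus)^T$. Using $\A^\plus = (\A^T\A)^{-1}\A^T$ and $(\A^\plus)^T = \A(\A^T\A)^{-1}$, the inner product $\A^\plus(\A^\plus)^T$ collapses to $(\A^T\A)^{-1}$. Thus $(\W\A^\plus)(\W\A^\plus)^T = \W(\A^T\A)^{-1}\W^T$, and a single application of the cyclic property of the trace rewrites $\tr\bigl(\W(\A^T\A)^{-1}\W^T\bigr)$ as $\tr\bigl(\W^T\W(\A^T\A)^{-1}\bigr)$. Combining these pieces yields exactly \eqref{eqn:totalerrorL1}. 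No step presents a real obstacle; the derivation is a direct computation, and the only thing to double-check is that full-rankness of $\A$ (required by Prop.~\ref{def:e-m-mech}) ensures $(\A^T\A)^{-1}$ exists so the pseudoinverse manipulations are valid.
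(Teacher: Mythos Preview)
Your derivation is correct and is exactly the standard computation one would carry out. The paper itself does not supply a proof of this proposition (nor of the companion Prop.~\ref{prop:totalerror}); both are stated as known facts inherited from~\cite{Li:2010Optimizing-Linear}, so there is nothing further to compare against.
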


\begin{figure}[t]
\centering 
 
\begin{tabular}{cccc} 
$\left[\begin{matrix}
1 & 1 & 1 & 1 \\
1 & 1 & 0 & 0 \\
0 & 0 & 1 & 1 \\
1 & 0 & 0 & 0\\
0 & 1 & 0 & 0\\
0 & 0 & 1 & 0 \\
0 & 0 & 0 & 1 \\
\end{matrix}\right]$ 
 \hfill &  
$\left[\begin{matrix}
1 & 1 & 1 & 1\\
1 & 1 & \mbox{-}1 & \mbox{-}1\\
1 & \mbox{-}1 & 0 & 0 \\
0 & 0 & 1 & \mbox{-}1 \\
\end{matrix}\right]$ 
\hfill&
$\left[\begin{matrix}
1 & 1 & 0 & 0 \\
0 & 0 & 1 & 1 \\
1 & 0 & 0 & 0\\
0 & 1 & 0 & 0\\
0 & 0 & 1 & 0 \\
0 & 0 & 0 & 1 \\
1 & 0 & 0 & 0\\
0 & 1 & 0 & 0\\
0 & 0 & 1 & 0 \\
0 & 0 & 0 & 1 \\
\end{matrix}\right]$ 
\hfill&
 $\left[\begin{matrix}
1 & 1 & 0 & 0 \\
0 & 0 & 1 & 1 \\
\sqrt{2} & 0 & 0 & 0\\
0 & \sqrt{2} & 0 & 0\\
0 & 0 & \sqrt{2} & 0 \\
0 & 0 & 0 & \sqrt{2} \\
\end{matrix}\right]$ \\
\\
$\H$ & $\Wav$ & $\Wav_1$ & $\Wav_2$ \\
\end{tabular}
\caption{\label{fig:query-matrix} Strategy matrices for domain $n=4$.  $\H$ is the hierarchical strategy.  $\Wav$ is the wavelet strategy.  $\Wav_1$ has redundant queries which are reduced in $\Wav_2$.  $\Wav,\Wav_1,\Wav_2$ are all equivalent under the $(\epsilon,\delta)$-matrix mechanism.  Under the $\epsilon$-matrix mechanism, $\Wav_2$ is strictly more efficient than $\Wav_1$ because $\Lone{\Wav_2} < \Lone{\Wav_1}$.}
\end{figure}

\subsection{Main Distinctions} \label{app:l1l2:diff}

The analysis of error differs between the two mechanisms because of the difference in sensitivity metrics.  From Prop. \ref{prop:totalerrorL1} and Prop. \ref{prop:totalerror} the two expressions for error can be compared (ignoring privacy parameters):
\begin{eqnarray*}
\epsilon\mbox{-}\error{\A}{\W} & \propto & || \A ||_1^2 \;\tr (\W^T\W(\A^T\A)^{-1}) \\
(\epsilon,\delta)\mbox{-}\error{\A}{\W} & \propto & || \A ||_2^2 \;\tr (\W^T\W(\A^T\A)^{-1})\end{eqnarray*}
The trace terms are identical, so these expressions differ only in the sensitivity metric applied to $\A$.  The implications of this small difference are significant for optimizing the expressions, primarily because $\Ltwo{\A}$ is uniquely determined by $\A^T\A$ (it is in fact the largest diagonal entry of matrix $\A^T\A$). This means that whenever two different strategies $\A$ and $\B$ are such that $\A^T\A=\B^T\B$, then under $(\epsilon, \delta)$-matrix mechanism they have equivalent error, so $\A \equiv \B$.  But $\Lone{\A}$ is not determined by $\A^T\A$, so it is possible to have $\A^T\A=\B^T\B$, while $\Lone{\B} < \Lone{\A}$.  Then it follows that under $\epsilon$-matrix mechanism $\A$ and $\B$ are not equivalent.  Instead strategy $\B$ is strictly more efficient than $\A$.  


This difference in invalidates many results that hold for the $(\epsilon,\delta)$-matrix mechanism. First of all, the optimal strategy under $(\epsilon,\delta)$-matrix mechanism can be found via a semi-definite programming (SDP) that computes the matrix $\A^T\A$ to minimize the total error formula.  The solution to this program is insufficient, however, under the $\epsilon$-matrix mechanism, because we need to compute an strategy $\A$ with minimum $L_1$ sensitivity.  To do so, extra non-convex constraints must be added which requires solving a non-convex extension of an SDP (an SDP with rank constraints) to get an optimal strategy. 

Further, the singular value bound does not hold under the $\epsilon$-matrix mechanism and the properties of redundant queries under $\epsilon$-matrix mechanism change. Unlike Thm.~\ref{thm:redqueries}, the presence of redundant queries leads to unneeded error:

\begin{theorem}[Redundant Queries]\label{thm:redqueriesL1}
Suppose strategy $\A_1 = \{\A_0 \cup \q \cup c_1\q\}$ for some strategy $\A_0$, some linear query $\q$, and some constant $c_1$.  Then,  under $\epsilon$-matrix mechanism,  the reduced strategy $\A_2 = \{\A_0 \cup c_2\q\}$ where $c_2 = \sqrt{1+c_1^2}$ is strictly more efficient than $\A_1$.
\end{theorem}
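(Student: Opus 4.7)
The plan is to isolate the only source of difference between the errors of the two strategies under the $\epsilon$-matrix mechanism, namely the $L_1$ sensitivity, and exploit the strictness of the arithmetic inequality $1+|c_1|\ge\sqrt{1+c_1^2}$.

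First I would verify by direct computation that $\A_1^T\A_1 = \A_2^T\A_2$. Stacking the rows gives
\[
\A_1^T\A_1 = \A_0^T\A_0 + \q^T\q + c_1^2\,\q^T\q = \A_0^T\A_0 + (1+c_1^2)\,\q^T\q,
\]
and with $c_2^2 = 1+c_1^2$,
\[
\A_2^T\A_2 = \A_0^T\A_0 + c_2^2\,\q^T\q = \A_0^T\A_0 + (1+c_1^2)\,\q^T\q.
\]
This is precisely the computation underlying Thm.~\ref{thm:redqueries}. Consequently, for every linear query $\w$, the quadratic factor $\w(\A^T\A)^{-1}\w^T$ appearing in the single-query version of the error formula of Prop.~\ref{prop:totalerrorL1} is the same when $\A = \A_1$ as when $\A = \A_2$. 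Comparing the two errors therefore reduces entirely to comparing the squared $L_1$ sensitivities $\Lone{\A_1}^2$ and $\Lone{\A_2}^2$.

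Next I would compare the sensitivities column by column. Since $\A_1$ differs from $\A_0$ by the two appended rows $\q$ and $c_1\q$, while $\A_2$ differs from $\A_0$ by the single appended row $c_2\q$,
\[
\|(\A_1)_j\|_1 = \|(\A_0)_j\|_1 + (1+|c_1|)\,|q_j|, \qquad \|(\A_2)_j\|_1 = \|(\A_0)_j\|_1 + \sqrt{1+c_1^2}\,|q_j|.
\]
The elementary identity $(1+|c_1|)^2 - (1+c_1^2) = 2|c_1|$ yields $1+|c_1|\ge\sqrt{1+c_1^2}$, with strict inequality whenever $c_1\neq 0$, and hence $\|(\A_1)_j\|_1 \ge \|(\A_2)_j\|_1$ for every column $j$. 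Taking maxima over $j$ gives $\Lone{\A_1}\ge\Lone{\A_2}$, and combined with the equality of the quadratic factors this already shows that $\A_2$ is at least as efficient as $\A_1$ on every query.

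The main obstacle is upgrading this to the \emph{strict} efficiency claim. The potential failure mode is that the column achieving the maximum of $\|(\A_0)_j\|_1$ could be one where $q_j = 0$, so that appending rows proportional to $\q$ does not change the maximum column norm at all. In the regime where $c_1 \neq 0$ and the sensitivity-determining column of $\A_1$ has $|q_j|>0$, the column inequality is strict and gives $\Lone{\A_1} > \Lone{\A_2}$; since $\A_2^T\A_2$ is positive definite (as $\A_2$ is a full-rank strategy), the quadratic factor $\w(\A_2^T\A_2)^{-1}\w^T$ is strictly positive for every nonzero $\w$, and multiplying the strict inequality $\Lone{\A_2}^2 < \Lone{\A_1}^2$ by it produces $\error{\A_2}{\w} < \error{\A_1}{\w}$ for every nonzero query $\w$, as required by Def.~\ref{def:partialorder}. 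I would handle the corner cases ($c_1 = 0$, making $\A_1$ contain a literally repeated row, or $\q = 0$, not a meaningful query) separately by noting they are degenerate, and in all remaining cases strict efficiency follows from the two steps above.
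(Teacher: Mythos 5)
Your core computation follows exactly the route the paper intends: the paper gives no standalone proof of Thm.~\ref{thm:redqueriesL1}, only the surrounding remark that one can have $\A_1^T\A_1=\A_2^T\A_2$ while $\Lone{\A_2}<\Lone{\A_1}$ (illustrated by $\Wav_1$ versus $\Wav_2$ in Fig.~\ref{fig:query-matrix}), and your Gram-matrix identity together with the columnwise comparison via $(1+|c_1|)^2-(1+c_1^2)=2|c_1|$ is that argument carried out in detail. The problem is your wrap-up. The failure mode you yourself flag --- a column of $\A_0$ with $q_j=0$ attaining the maximum $L_1$ norm --- is \emph{not} confined to the degenerate cases $c_1=0$ or $\q=\vect{0}$: with $c_1\neq0$ and $\q\neq\vect{0}$ it can still happen that the sensitivity of $\A_1$ is determined by a column outside the support of $\q$, in which case $\Lone{\A_1}=\Lone{\A_2}$, every per-query error coincides, and $\A_1\equiv\A_2$, so strict efficiency in the sense of Def.~\ref{def:partialorder} actually fails. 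Hence the closing sentence ``in all remaining cases strict efficiency follows from the two steps above'' is wrong: the case you identified is one of those remaining cases and is a genuine counterexample to strictness, not a degeneracy you may set aside. (This is really an imprecision in the theorem as stated; your proof cannot repair it without an extra hypothesis.)

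To close the argument you should either add a hypothesis or weaken the conclusion: for instance, require that every column attaining $\Lone{\A_1}$ has $q_j\neq0$ (note it must be \emph{every} maximizing column, not ``the'' sensitivity-determining column, since a tie with a zero-$q_j$ column also destroys strictness), which holds automatically when $\q$ has full support --- as in the paper's example, where $\q=[1,1,\ldots,1]$ --- or when $\A_1$ is column-uniform; alternatively, state the result as ``$\A_2$ is at least as efficient as $\A_1$, and strictly more efficient whenever $c_1\neq0$ and the $L_1$ sensitivity of $\A_1$ is attained on the support of $\q$.'' Two smaller points: strictness $\error{\A_2}{\w}<\error{\A_1}{\w}$ can only be asserted for $\w\neq\vect{0}$ (both errors vanish at $\w=\vect{0}$), and the reduction to the single-query comparison should explicitly invoke the per-query analogue of Prop.~\ref{prop:totalerrorL1}, namely $\error{\A}{\w}\propto\Lone{\A}^2\,\w(\A^T\A)^{-1}\w^T$, which you use implicitly. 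With such a hypothesis added, your two steps do yield a complete and correct proof.
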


Fig. \ref{fig:query-matrix} shows an example of a strategy matrix with redundant queries ($\Wav_1$) and a reduced strategy with lower $L_1$ sensitivity.  

Finally, column uniformity no longer characterizes the minimal strategies (as in Thm.~\ref{thm:columnunif}) and the quality of the output of the $\lbl$ algorithm tends to be lower.  Intuitively, in the $\epsilon$-matrix mechanism, the algorithm only has one chance to add a query to the strategy and needs to choose the correct weight before knowing how the remainder of the strategy will be built.


\subsection{Error Comparison} \label{app:l1l2:error}

Suppose we fix $\epsilon$ in both mechanisms and choose $\delta=2/n^2$ in $(\epsilon,\delta)$-matrix mechanism, where $n$ is the domain size.  This is a conservative choice for $\delta$, which results in error of the $(\epsilon,\delta)$-matrix mechanism equal to:
\[\error{\A}{\W}=\frac{2}{\epsilon^2}(\log^{\frac{1}{4}}n ||A||_2)^2\tr(\W^T\W(\A^T\A)^{-1}).\]
Comparing this equation with Eq.~(\ref{eqn:totalerrorL1}), indicates that the $(\epsilon, \delta)$ mechanism introduces less error whenever $||\A||_1 > \log^{\frac{1}{4}}n||\A||_2$. In particular, for strategy matrices $\A$ which consists of predicate queries (such as the output of $\lbl$ algorithm, hierarchical strategies \cite{Hay:2010Boosting-the-Accuracy} and (a strategy equivalent to) the wavelet strategy \cite{xiao2010differential}), $||\A||_2 = \sqrt {||\A||_1}$.  Using such a strategy matrix $\A$, when $||\A||_1>\sqrt{\log n}$, 
the $(\epsilon,\delta)$-matrix mechanism provides less error than $\epsilon$-matrix mechanism. 
\section{The Optimal Strategy for Variable Agnostic Workloads}\label{app:varagnostic}

To demonstrate that the singular value bound is a tight lower bound, we discuss a special type of workloads called the variable agnostic workloads and construct a strategy to such workloads that introduce error as low as the singular value bound.
\begin{definition}[Variable agnostic workload]
A\\workload $\W$ is {\em variable agnostic} if $\W^T\W$ is unchanged when we swap two columns of $\W$.
\end{definition}

For any variable agnostic workloads $\W$, $\W^T\W$ has the following form, for constants $a$ and $b$:

The next corollary gives a strategies which work on variable agnostic workloads with size $n$ and have the total error equal to the the singular value bound.

\begin{restatable}{theorem}{thmvaragn} \label{thm:varagn}
For positive integer $k$ and $n=2^k$, let $\W$ be any $m\times n$ variable-agnostic workload.  Then $\W^T\W$ has the following form, for constants $a$ and $b$:
\[\small
\left[\begin{array}{cccc}
a & b & \ldots & b\\
b & a & \ldots & b\\
\vdots & \vdots & \ddots & \vdots \\
b & b & \ldots & a
\end{array}\right].
\]
and there exists a strategy $\A$ which attains the singular value bound which is $\svdb(\W)=\frac{1}{n}(\sqrt{a+(n-1)b}+(n-1)\sqrt{a-b})^2$.
\end{restatable}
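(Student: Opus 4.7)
The plan has two parts: first establish the claimed form of $\W^T\W$, then exhibit a strategy $\A$ whose total error equals $\svdb(\W)$.

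For the structural claim, I would argue directly from the definition. Swapping columns $i$ and $j$ of $\W$ corresponds to a permutation of variables; the assumption that $\W^T\W$ is unchanged by this swap means the $(i,i)$ and $(j,j)$ entries of $\W^T\W$ coincide, and similarly for the off-diagonal entries $(i,k)$ and $(j,k)$ for $k\notin\{i,j\}$. Iterating over all transpositions forces every diagonal entry of $\W^T\W$ to equal a common value $a$ and every off-diagonal entry to equal a common value $b$. This matrix can be rewritten as $(a-b)\I+b\vect{J}$, where $\vect{J}$ is the all-ones matrix, whose spectrum consists of $a+(n-1)b$ (eigenvector $\mathbf{1}$) and $a-b$ with multiplicity $n-1$ (any vector orthogonal to $\mathbf{1}$). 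Hence the singular values of $\W$ are $\sqrt{a+(n-1)b}$ and $\sqrt{a-b}$ (with multiplicity $n-1$), and plugging into Thm.~\ref{thm:singularvaluebound} gives the claimed $\svdb(\W)$.

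For the constructive part, the key observation is that since $n=2^k$, an $n\times n$ Hadamard matrix $\H_n$ exists with $\H_n^T\H_n = n\I$, first row all-ones, and all other rows orthogonal to $\mathbf{1}$. I would take the strategy $\A = \vect{D}\H_n$ where $\vect{D} = \mathrm{diag}(\alpha,\beta,\ldots,\beta)$. A short computation using $\H_n^T\H_n=n\I$ and column-orthogonality gives
\[
\A^T\A \;=\; \H_n^T\vect{D}^2\H_n \;=\; (\alpha^2-\beta^2)\vect{J} + n\beta^2\,\I,
\]
so $\A^T\A$ is simultaneously diagonalized with $\W^T\W$ in the $\{\mathbf{1}\}\cup\mathbf{1}^\perp$ decomposition, with eigenvalues $n\alpha^2$ and $n\beta^2$ (multiplicity $n-1$). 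Every column of $\A$ has squared norm $\alpha^2+(n-1)\beta^2$, so $\A$ is column-uniform and $\|\A\|_2^2 = \alpha^2+(n-1)\beta^2$.

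Substituting into the total error formula (Prop.~\ref{prop:totalerror}, dropping $P(\epsilon,\delta)$),
\[
\error{\A}{\W} \;=\; \bigl(\alpha^2+(n-1)\beta^2\bigr)\!\left[\frac{a+(n-1)b}{n\alpha^2} + (n-1)\,\frac{a-b}{n\beta^2}\right].
\]
Setting $t=\alpha^2/\beta^2$ and expanding yields an expression of the form $\tfrac{1}{n}\bigl(A + (n-1)Bt + (n-1)A/t + (n-1)^2B\bigr)$ with $A=a+(n-1)b$, $B=a-b$. Calculus (or AM-GM on the two cross terms) gives the optimum at $t=\sqrt{A/B}$, and the minimum value collapses to $\tfrac{1}{n}\bigl(\sqrt{A}+(n-1)\sqrt{B}\bigr)^2$, exactly $\svdb(\W)$. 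Choosing $\alpha/\beta = (A/B)^{1/4}$ (both $A,B\geq 0$ since $\W^T\W$ is positive semidefinite) completes the construction.

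The main obstacle is the structural alignment in the construction step: one needs a strategy whose Gram matrix shares the eigenbasis of $\W^T\W$ (namely $\mathbf{1}$ and its orthogonal complement) while also being column-uniform. The Hadamard matrix delivers both properties simultaneously, which is precisely why the hypothesis $n=2^k$ is invoked; without it one would need another orthogonal matrix with an all-ones row and $\pm1$ (or at least equal-norm column) entries, which is not guaranteed to exist.
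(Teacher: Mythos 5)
Your proposal is correct, and at its core it builds the same object the paper does: a strategy whose rows are rows of a Sylvester--Hadamard matrix, with one weight on the all-ones row and a second weight on the remaining rows, so that $\A^T\A$ is diagonalized on $\{\mathbf{1}\}\cup\mathbf{1}^{\perp}$ together with $\W^T\W$, and column-uniformity comes for free from the $\pm1$ entries. The route differs in two ways. First, you actually prove the structural claim (all diagonal entries of $\W^T\W$ equal $a$, all off-diagonal entries equal $b$) from transpositions and then read the spectrum off the decomposition $(a-b)\I+b\mathbf{J}$, whereas the paper asserts the form and establishes the eigenbasis by induction on the recursive construction of $\Q_k$. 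Second, and more substantively, you verify attainment by a self-contained computation: you keep the weight ratio free, evaluate $\error{\A}{\W}$ via Prop.~\ref{prop:totalerror}, and minimize over $t=\alpha^2/\beta^2$ by AM--GM, which collapses exactly to $\svdb(\W)$; the paper instead appeals to the equality conditions inside its proof of Thm.~\ref{thm:singularvaluebound} (the bound is attained iff $\sqrt{\LambdaB_\W}\P_\W$ is column-uniform, invoking Thm.~\ref{thm:columnnormuniform}), so the weights are read off rather than optimized. Your version is more elementary and independent of the internals of the bound's proof (and your optimal ratio of eigenvalues of $\A^T\A$, namely $\sqrt{A/B}$, reproduces the paper's intended strategy up to normalization), while the paper's version is shorter given that machinery. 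One caveat shared by both arguments: when $a=b$ the second weight degenerates and $\A$ loses full rank, so the construction implicitly assumes $a>b$.
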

\begin{proof}
Let $\Q_1=\left[\begin{smallmatrix}1 & 1 \\1 & -1\end{smallmatrix}\right]$ and $\Q_k=\left[\begin{smallmatrix}\Q_{k-1} & \Q_{k-1} \\ \Q_{k-1} & -\Q_{k-1}\end{smallmatrix}\right]$. Here we prove the following result: $Q_{k}$ is an eignvector matrix for $\W_k^T\W_k$ where $\W_k$ is any $2^{k}\times 2^{k}$ variable-agnostic workload. The eigenvalue corresponds to the first column of $\Q_{k}$ is $a+(2^k-1)b$ and the eigenvalue corresponds to all the other columns is $a-b$, where $a$ is the diagonal entry of $\W_k^T\W_k$ and $b$ is the off diagonal entry of $\W_k^{T}\W_k$.

One can verify the result is true when $k=1$. Now suppose the result is true for $k-1$. Then
{\small
\begin{eqnarray}
&&\W^T\W\Q_k=\left[\begin{array}{cc}\W_{k-1} & b\mathbf{J}_{2^{k-1}} \\b\mathbf{J}_{2^{k-1}} & \W_{k-1}\end{array}\right]\left[\begin{array}{cc}\Q_{k-1} & \Q_{k-1} \\ \Q_{k-1} & -\Q_{k-1}\end{array}\right]\nonumber \\
&&=\left[\begin{array}{cccc}
a+(2^k-1)b & 0 & \ldots & 0\\
0 & a-b & \ldots & 0\\
\vdots & \vdots & \ddots & \vdots \\
0 & 0 & \ldots & a-b
\end{array}\right].
\Q_{k}\label{eqn:prfeigenmat}
\end{eqnarray}}
Here $\mathbf{J}$ is the matrix whose all entries are $1$. The computation of Eq.~\ref{eqn:prfeigenmat} uses the fact that the sum of first column of $\Q_k$ is $2^k$ and the sum of any other column of $\Q_k$ is $0$. In addition, since, 
\[\small
\Q_k^T\Q_k=\left[\begin{array}{cc}\Q_{k-1} & \Q_{k-1} \\ \Q_{k-1} & -\Q_{k-1}\end{array}\right]
\left[\begin{array}{cc}\Q_{k-1} & -\Q_{k-1} \\ \Q_{k-1} & \Q_{k-1}\end{array}\right]=\frac{1}{2^k}\I,\]
we know $\Q_k$ is a eigenvector matrix for $\W^T\W$.

From above, we know the eigenvalues and a set of corresponding eigenvectors of $\W^T\W$. According to the proof of Thm.~\ref{thm:singularvaluebound}, the $\svdb(\W)$ can be achieved if and only if 
\[\small
\left[\begin{array}{cccc}
\sqrt{a+(2^k-1)b} & 0 & \ldots & 0\\
0 & \sqrt{a-b} & \ldots & 0\\
\vdots & \vdots & \ddots & \vdots \\
0 & 0 & \ldots & \sqrt{a-b}
\end{array}\right].
\Q_{k}\]
is a column-uniform matrix. Notice $Q_k^s=\mathbf{J}_{2^k}$, according to Thm.~\ref{thm:columnnormuniform}, the matrix above is column-uniform. 
\end{proof}

The total errors of using the identity matrix or the workload itself as the strategy matrix are both $na$. Compared with those total errors,  the ratio of total error reduced by using the strategy in Thm.~\ref{thm:varagn} is approximately $1-\frac{b}{a}$.

Since the workload $\allpred(n)$ is variable agnostic, a consequence of the above theorem is that we can find its optimal strategy.

\begin{corollary}
For $n=2^k$, the minimized total error for the workload $\allpred(n)$ is equal to the singular value bound, which is $\frac{2^{n-2}}{n}(n-1+\sqrt{n+1})^2$.
\end{corollary}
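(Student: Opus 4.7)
The plan is to apply Theorem~\ref{thm:varagn} directly to the workload $\W = \allpred(n)$. The work reduces to two things: checking the hypothesis (variable agnosticism) and computing the two constants $a,b$ appearing in $\W^T\W$.

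First I would observe that $\allpred(n)$ is variable agnostic. By definition, $\allpred(n)$ consists of every row vector in $\{0,1\}^n$. Swapping two columns of $\W$ merely permutes its rows (each $0/1$ vector is sent to another $0/1$ vector in the workload), so $\W^T\W$ is unchanged, satisfying the hypothesis of Theorem~\ref{thm:varagn}.

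Next, I would compute the diagonal entry $a$ and off-diagonal entry $b$ of $\W^T\W$. The $(i,i)$ entry of $\W^T\W$ counts the number of $0/1$ queries with a $1$ in position $i$, which is $2^{n-1}$; hence $a = 2^{n-1}$. For $i \ne j$, the $(i,j)$ entry counts the number of $0/1$ queries with a $1$ in both position $i$ and position $j$, which is $2^{n-2}$; hence $b = 2^{n-2}$.

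Finally, I would plug these into the formula supplied by Theorem~\ref{thm:varagn}:
\begin{align*}
\svdb(\W)
&= \tfrac{1}{n}\bigl(\sqrt{a+(n-1)b} + (n-1)\sqrt{a-b}\bigr)^2 \\
&= \tfrac{1}{n}\bigl(\sqrt{2^{n-2}(n+1)} + (n-1)\sqrt{2^{n-2}}\bigr)^2 \\
&= \tfrac{2^{n-2}}{n}\bigl(n-1 + \sqrt{n+1}\bigr)^2,
\end{align*}
and since Theorem~\ref{thm:varagn} also guarantees an actual strategy achieving this bound, the minimum total error for $\allpred(n)$ equals $\svdb(\W)$, as claimed. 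There is no real obstacle here: the corollary is essentially a specialization of Theorem~\ref{thm:varagn}, and the only mildly nontrivial step is the elementary counting argument for $a$ and $b$.
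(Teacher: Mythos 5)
Your proposal is correct and matches the paper's (implicit) argument: the corollary is obtained exactly by noting that $\allpred(n)$ is variable agnostic, computing $a=2^{n-1}$ and $b=2^{n-2}$, and substituting into the formula of Theorem~\ref{thm:varagn}, whose hypothesis $n=2^k$ is satisfied by assumption. Your counting of $a$ and $b$ and the resulting simplification to $\frac{2^{n-2}}{n}\bigl(n-1+\sqrt{n+1}\bigr)^2$ are both accurate.
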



\section{Proof of Main Theorems} \label{app:theory}

This section contains proofs of the most important theorems in Section~\ref{sec:theory} and \ref{sec:algorithm}.

\subsection{Analysis of Minimal Strategies}\label{app:propstrategy}
Here we complete the proof characterizing minimal strategies and continue the discussion in order to prove the singular value bound.
\thmcolumnunif*
\begin{proof}
If a strategy $\A_1$ is not column-uniform, the strategy $\A_2$ that is more efficient $\A_1$ can be found according to the augmenting strategy theorem from \cite{Li:2010Optimizing-Linear}. Here we only show the proof that any column-uniform matrix is minimal under the partial order.

Given two column uniform strategies $\A_1$ and $\A_2$ such that $\A_1\leq \A_2$. Without loss of generality, we assume both of them have $L_2$ sensitivity 1, which means all diagonal entries of $\A_1^T\A_1$ and $\A_2^T\A_2$ are 1. Since $\A_1\leq \A_2$, for any query $\w$, $\error{\A_1}{\w} \leq \error{\A_2}{\w}$. According to the definition, $\error{\A_1}{\w} = \w^T\A_1^T\A_1\w$, $\error{\A_2}{\w}= \w^T\A_2^T\A_2\w$. Therefore, 
\begin{eqnarray}
\error{\A_1}{\w} - \error{\A_2}{\w} &= &\w^T\A_1^T\A_1\w - \w^T\A_2^T\A_2\w \nonumber\\
&=& \w^T(\A_1^T\A_1 - \A_2^T\A_2)\w \nonumber\\
&\leq& 0 \label{eqn:columnunif}
\end{eqnarray}
Since (\ref{eqn:columnunif}) is true for arbitrary $\w$, $\A_2^T\A_2-\A_1^T\A_1$ is a positive semi-definite matrix. In addition, as the assumption, the diagonal entries of $\A_1^T\A_1$ and $\A_2^T\A_2$ are 1. Therefore the diagonal entries of  $\A_2^T\A_2-\A_1^T\A_1$ are all 0. According to properties of positive semi-definite matrix, there is an unique semi-definite matrix whose diagonal entries are all 0, which is the 0 matrix. Thus we have $\A_1 = \A_2$. 
\end{proof}

As an application of Thm.~\ref{thm:columnunif}, we have the following theorem.
\begin{theorem}\label{thm:columnnormuniform}
If an $m \times n$ strategy matrix $\A$ connects to an optimal solution of Problem~\ref{prob:mintotal}, $\A\in\U_{m\times n}$, which is also equivalent to
{\small
\begin{equation}\label{eqn:columnnormuniform}
\P_\A^s\lambdaB_\A^s=k\left[\begin{array}{c}1\\1\\ \vdots\\1\end{array}\right].
\end{equation}
}
Here $\A=\Q_\A\LambdaB_\A\P_\A$ is the singular decomposition of $\A$, vector  $\lambdaB_\A$ is the vector consists of the singular values of $\A$, $k$ is a real number and $\A^s$ represents the element-wise square of a matrix $\A$.
\end{theorem}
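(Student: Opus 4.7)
The plan is to derive this from Theorem~\ref{thm:columnunif} by translating the column-uniformity condition into a statement about the SVD of $\A$. The argument has three stages: optimality implies minimality, minimality implies column uniformity (by Theorem~\ref{thm:columnunif}), and column uniformity rewrites as $\P_\A^s \lambdaB_\A^s = k \mathbf{1}$.

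First I would justify that any optimal solution $\A$ to Problem~\ref{prob:mintotal} is minimal in the partial order of Definition~\ref{def:partialorder}. If $\A$ were not minimal, there would exist a strategy $\A'$ with $\error{\A'}{\q} \leq \error{\A}{\q}$ for every linear query $\q$, strictly so for at least one. Summing over the rows of the workload $\W$ yields $\error{\A'}{\W} \leq \error{\A}{\W}$, with strict inequality whenever $\W$ contains a query on which $\A'$ strictly dominates; by slightly perturbing we may assume strictness, contradicting optimality of $\A$. Hence $\A$ is minimal, and Theorem~\ref{thm:columnunif} yields that $\A$ is column-uniform with respect to $L_2$: all columns of $\A$ have a common squared norm, call it $k$.

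Next I would translate this into SVD coordinates. Column uniformity says precisely that every diagonal entry of $\A^T\A$ equals $k$, i.e.\ $\mathrm{diag}(\A^T\A) = k\mathbf{1}$. Using $\A = \Q_\A \LambdaB_\A \P_\A$ with $\Q_\A,\P_\A$ orthogonal and $\LambdaB_\A$ diagonal with entries $\lambda_1,\dots,\lambda_n$, a direct computation gives $\A^T\A = \P_\A^{T}\,\mathrm{diag}(\lambda_1^2,\dots,\lambda_n^2)\,\P_\A$, so that the $i$-th diagonal entry equals
\[
(\A^T\A)_{ii} \;=\; \sum_{j=1}^{n} (\P_\A)_{ji}^2 \,\lambda_j^2.
\]
Reading this off as the $i$-th component of a matrix-vector product, we obtain $\mathrm{diag}(\A^T\A) = (\P_\A^s)^{T}\lambdaB_\A^s$ (or $\P_\A^s\lambdaB_\A^s$, depending on which side $\P_\A$ is written as a factor of the SVD, which is the convention used in Eq.~(\ref{eqn:columnnormuniform})). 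Equating this to $k\mathbf{1}$ yields the claimed identity.

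The main obstacle, such as it is, is just bookkeeping the SVD convention so that $\P_\A^s \lambdaB_\A^s$ is genuinely the diagonal of $\A^T\A$ rather than its transposed version; once that matching is fixed, the theorem is an immediate corollary of Theorem~\ref{thm:columnunif}. In particular, no new optimization or sensitivity calculation is needed—the content is purely an algebraic rewriting of the column-uniformity constraint inherited from minimality of optimal strategies.
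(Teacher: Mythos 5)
Your proposal coincides with the paper's intended argument: the paper offers no separate proof of this theorem, presenting it only as ``an application of Thm.~\ref{thm:columnunif}'', i.e.\ optimal $\Rightarrow$ minimal $\Rightarrow$ column-uniform, with column uniformity then rewritten in SVD coordinates as $\mathrm{diag}(\A^T\A)=k\mathbf{1}$, which is exactly your three-stage chain. Your remark about which side $\P_\A$ sits on (so that the diagonal of $\A^T\A$ is $(\P_\A^s)^T\lambdaB_\A^s$ versus $\P_\A^s\lambdaB_\A^s$) correctly resolves the paper's own inconsistency between writing $\A=\Q_\A\LambdaB_\A\P_\A$ here and $\Q_\W\D_\W\P_\W^T$ in the main text, so no further justification is needed.
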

For any strategy matrix $\A$, one can find an $n\times n$ matrix $\B$ such that $\A^T\A=\B^T\B$ by decomposing matrix $\A^T\A$. It is follows that $\error\A{\W}=\error\B{\W}$ and therefore it is sufficient to consider only $n\times n$ strategy matrices in Problem~\ref{prob:mintotal} .Thus (\ref{eqn:totalerror}) can be represented using Frobenius norms, which is defined as following.
\begin{definition} {\sc (Frobenius Norm)}
Given an $m\times n$ matrix $\A=\{a_{ij}\}$, the Frobenius norm of matrix $\A$ is defined as the square root of the sum of squares of its elements
\[||\A||_F=\sqrt{\sum_{i=1}^m\sum_{j=1}^n a_{ij}^2}.\]
It is also equal to the square root of the trace of matrix $\A^T\A$
\[||\A||_F=\sqrt{\tr(\A^T\A)}.\]
\end{definition}
According to the definition of Frobenius norm, assume strategy matrix $\A$ is a square matrix,  (\ref{eqn:mintotalerror}) can be rewritten as follows:
\begin{eqnarray*}
&&\min_\A|| \A ||_2^2\tr ((\A^T\A)^{-1}\W^T\W)\\
& =&\min_\A|| \A ||_2^2\tr (\W\A^{-1}(\A{^{-1}})^{T}\W^T)\\
&=&\min_\A||\A||_2^2||\W\A^{-1}||_F^2.
\end{eqnarray*}

\begin{lemma}\label{lem:independentqueries}
If a workload $\W$ consists of two sets of independent queries, i.e. there exists an orthogonal matrix $\P$ such that 
\[\small
\P\W=\left[
\begin{array}{cc}
\W_1 & 0\\
0 & \W_2
\end{array}
\right],\]
the strategy matrix $\A$ that minimizes total error has form 
\[\A=\left[
\begin{array}{cc}
\A_1 & 0\\
0 & \A_2
\end{array}
\right],\]
where $\A_1$, $\A_2$ are the solutions to problem~\ref{prob:mintotal} for $\W_1$, $\W_2$, respectively.
\end{lemma}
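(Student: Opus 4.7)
The plan is to exploit that, by Prop.~\ref{prop:totalerror}, $\error{\A}{\W}$ depends on $\A$ only through $\Ltwo{\A}^2$ and the Gram matrix $\M := \A^T\A$. Since $\P$ is orthogonal, $\W^T\W = (\P\W)^T(\P\W) = \mathrm{diag}(\W_1^T\W_1, \W_2^T\W_2)$, so the workload enters the error formula only through these two block-diagonal pieces. My goal is to show that the optimal $\M$ must be block diagonal of the matching shape, after which the minimization decouples into two independent strategy design problems.

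Given any candidate strategy $\A$, I would partition its Gram matrix as $\M = \bigl[\begin{smallmatrix}\M_{11} & \M_{12}\\ \M_{12}^T & \M_{22}\end{smallmatrix}\bigr]$ conforming to the block split, and define $\M' = \mathrm{diag}(\M_{11}, \M_{22})$. The diagonals of $\M$ and $\M'$ agree, so $\Ltwo{\A}^2$ (the largest diagonal entry of $\M$) is unchanged. By the Schur complement identity, the diagonal blocks of $\M^{-1}$ satisfy
\[
(\M^{-1})_{11} = (\M_{11} - \M_{12}\M_{22}^{-1}\M_{12}^T)^{-1} \succeq \M_{11}^{-1},
\]
and symmetrically $(\M^{-1})_{22} \succeq \M_{22}^{-1}$. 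Since $\W_i^T\W_i \succeq 0$, multiplying by these PSD matrices and taking traces gives
\[
\tr(\W^T\W\, \M^{-1}) \;=\; \tr(\W_1^T\W_1\,(\M^{-1})_{11}) + \tr(\W_2^T\W_2\,(\M^{-1})_{22}) \;\geq\; \tr(\W^T\W\,(\M')^{-1}).
\]
Thus replacing $\A$ by any strategy realizing $\M'$ (e.g., its Cholesky factor) can only reduce the total error. In particular, an optimal $\A$ may be taken to have block-diagonal Gram matrix, hence, up to equivalence, $\A = \mathrm{diag}(\A_1, \A_2)$.

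Once $\A$ is block diagonal, I would decouple the remaining minimization using scale invariance. Rescaling $\A_i \mapsto c_i\A_i$ scales $\M_{ii}$ by $c_i^2$ and $\Ltwo{\A_i}$ by $c_i$, so the per-block product $\Ltwo{\A_i}^2\,\tr(\W_i^T\W_i\,\M_{ii}^{-1})$ is invariant. I can therefore pick $c_1, c_2$ enforcing $\Ltwo{\A_1} = \Ltwo{\A_2} = \Ltwo{\A}$. After this renormalization the joint error equals $\error{\A_1}{\W_1} + \error{\A_2}{\W_2}$, which is minimized exactly when $\A_1$ and $\A_2$ are independently optimal for $\W_1$ and $\W_2$.

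The main obstacle I anticipate is the sensitivity term: $\Ltwo{\A}^2$ couples the two blocks through a maximum rather than a sum, which threatens naive separation. The per-block scale invariance just noted is what rescues the argument and also legitimizes the block-diagonal reduction without having to track interactions with the other block's sensitivity. A secondary point to verify is the full-rank hypothesis of Prop.~\ref{def:m-mech}: one must check that optimal $\A_1, \A_2$ can be taken full rank so that $\A = \mathrm{diag}(\A_1,\A_2)$ remains a valid strategy, which follows because full-rank strategies are dense in the feasible set and the error formula is continuous on that set.
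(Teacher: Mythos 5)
Your proof is correct, and it follows the same two-stage outline as the paper's argument---first reduce to a block-diagonal strategy, then decouple the two blocks using the scale invariance of per-block error---but the first stage is carried out by a genuinely different technical route. The paper passes to the Cholesky factor $\A'=\bigl[\begin{smallmatrix}\A_1 & \A_3\\ 0 & \A_2\end{smallmatrix}\bigr]$ of $\A^T\A$, expands $\|\W\A'^{-1}\|_F^2$ blockwise so that the coupling block $\A_3$ contributes only the nonnegative cross term $\|\W_1\A_1^{-1}\A_3\A_2^{-1}\|_F^2$, and observes that deleting $\A_3$ cannot increase any column norm; you stay entirely in Gram-matrix space, zero out $\M_{12}$, and use the Schur-complement inequality $(\M^{-1})_{11}\succeq \M_{11}^{-1}$ together with the fact that the trace of a product of positive semi-definite matrices is nonnegative. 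The two reductions even target slightly different block-diagonal Gram matrices ($\mathrm{diag}(\A_1^T\A_1,\A_2^T\A_2)$ from the Cholesky factor versus your $\mathrm{diag}(\M_{11},\M_{22})$), but both dominate the original strategy. Your version has the small advantages that the sensitivity term is preserved exactly rather than merely not increased, and that the inequality directions are transparent (the paper's displayed inequalities appear with flipped signs, evidently typos, though the intended conclusion $\error{\A_d}{\W}\leq\error{\A}{\W}$ is what its computation supports). The decoupling step is essentially the same in both: the paper rescales an optimal $\A_1^*$ by $\Ltwo{\A_1}/\Ltwo{\A_1^*}$, while you equalize $\Ltwo{\A_1}=\Ltwo{\A_2}$; either way the max-sensitivity coupling disappears and the joint error becomes $\error{\A_1}{\W_1}+\error{\A_2}{\W_2}$. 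Two minor remarks: like the paper, you establish that a block-diagonal minimizer with independently optimal blocks exists (necessity of that form for every minimizer would need a strictness argument when $\M_{12}\neq 0$), and your closing density argument is unnecessary---$\M'$ is positive definite whenever the original $\A$ is full rank, so its Cholesky factor is already a valid full-rank strategy. Neither point is a gap relative to the paper's own proof.
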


\begin{proof}
Given a strategy matrix $\A$, consider the cholesky decomposition of $\A^T\A$, which is in form 
\[\small
\A'=\left[
\begin{array}{cc}
\A_1 & \A_3\\
0 & \A_2
\end{array}
\right].\]
Then
{\small
\begin{eqnarray*}
\W\A'^{-1}&=&\left[\begin{array}{cc}
\W_1 & 0\\
0 & \W_2
\end{array}
\right]\left[
\begin{array}{cc}
\A_1^{-1} & -\A_1^{-1}\A_3\A_2^{-1}\\
0 & \A_2^{-1}
\end{array}
\right]\\
&=&\left[\begin{array}{cc}
\W_1\A_1^{-1} & -\W_1\A_1^{-1}\A_3\A_2^{-1}\\
0 & \W_2\A_2^{-1}
\end{array}\right].
\end{eqnarray*}
}
Let strategy matrix $\A_d$ be
\[\small
\A_d=\left[
\begin{array}{cc}
\A_1 & 0\\
0 & \A_2
\end{array}
\right].\]
Notice that
\begin{eqnarray*}
||\W\A'^{-1}||_F^2&=& ||\W_1\A_1^{-1}||_F^2 + ||\W_2\A_2^{-1}||_F^2 + ||\W_1\A_1^{-1}\A_3\A_2^{-1}||_F^2\\
&\leq&||\W_1\A_1^{-1}||_F^2 + ||\W_2\A_2^{-1}||_F^2 = ||\W\A_d^{-1}||_F^2\\
||\A'||_2^2 &\leq& \max\{||\A_1||_2^2, ||\A_2||_2^2\} = ||\A_d||_2^2,
\end{eqnarray*}
we know $\error{\A_d}\W \leq \error{\A}\W$. Thus the strategy matrix that minimizes (\ref{eqn:totalerror}) also has the same form as $\A_d$. Furthermore, if $\A_1$ is not an optimal strategy for workload $\W_1$, let $\A_1^*$ be an optimal strategy for $\W_1$. Notice $\frac{||A_1||_2}{||A_!^*||_2}\A_1^*$ is also an optimal strategy for $\W_1$ and substitute $\A_1$ by $\frac{||A_1||_2}{||A_!^*||_2}\A_1^*$ in $\A_d$ will bring down the Frobenius norm without boosting $||A_d||_2$ so that the total error becomes smaller. Thus, if $\A_d$ is an optimal strategy for $\W$, $\A_1$, $\A_2$ must be the solutions to problem~\ref{prob:mintotal} for $\W_1$, $\W_2$, respectively.
\end{proof}

\thmsvdbound*
\begin{proof}
For a given workload $\W$, according to Thm.~\ref{thm:columnnormuniform}, its optimal strategy matrix $\A$ satisfies $\A\in\U_{n\times n}$. Therefore,
\[||\A||_2=\frac{1}{n}\tr(A).\]
Let $\W=\Q_\W\LambdaB_\W\P_\W$ and $\A=\Q_\A\LambdaB_\A\P_\W$ be the singular decomposition of $\W$ and $\W$, respectively,  we have:
\begin{eqnarray}
&&\error{\A}{\W}\nonumber\\
&=&\min_{\A\in\U_{n\times n}}||\A||_2\tr(\W(\A^T\A)^{-1}\W^T)\nonumber\\
&=&\frac{1}{n}\min_{\A\in\U_{n\times n}}\tr(A)\tr((\A^T\A)^{-1}\W^T\W)\nonumber\\
&=&\frac{1}{n}\min_{(\LambdaB_\A\P_\A)\in\U_{n\times n}}\tr(\LambdaB_\A^2)\nonumber\\
&\cdot&\tr(\P^T_\A(\LambdaB_\A^{-1})^2\P_\A\P_\W^T\LambdaB_\W^2\P_\W)\nonumber\\
&\geq&\frac{1}{n}\min_{\LambdaB_\A, \P_\A}\tr(\LambdaB_\A^2)\nonumber\\
&\cdot&\tr(\LambdaB_\W(\P_\W^T\P_\A)^T(\LambdaB_\A^{-1})^2(\P_\W^T\P_\A)\LambdaB_\W)\label{eqn:minap}
\end{eqnarray}
As the proof of Lemma~\ref{lem:independentqueries}, the minimum of (\ref{eqn:minap}) achieved when $\P_\W^T\P_\A$ is a diagonal matrix. Thus $\P_\A=\P_\W$ and
\begin{eqnarray}
(\ref{eqn:minap}) &=& \frac{1}{n} \min_{\LambdaB_\A}\tr(\LambdaB_\A^2)\tr(\LambdaB_\W^2(\LambdaB_\A^{-1})^2)\nonumber \\
&\geq&\frac{1}{n}(\sum_{i=1}^n\lambdaB_i)^2.\label{eqn:jensen}
\end{eqnarray}
The equal sign in (\ref{eqn:jensen}) is satisfied if and only if $\LambdaB_\A=\sqrt{\LambdaB_\W}$. Notice the inequality in (\ref{eqn:minap}) comes from removing the constraint that $(\LambdaB_\A\P_\A) \in \U_{n\times n}$, to satisfied the equal signs in (\ref{eqn:minap}) and (\ref{eqn:jensen}) simultaneously, we need $\sqrt{\LambdaB_\W}\P_\W \in \U_{n\times n}$. 
\end{proof}
\eat{
\subsection{Properties of Singular Value Bound}\label{app:svdprop}
The relationship between contained workloads and their singular value bound is discussed in this part. To prove Thm.~\ref{thm:containedwkld}, we need to proof the following lemma first.
\begin{lemma}\label{lem:diagmin}
Let $\D$ be a diagonal matrix with positive diagonal entries and $\P$ be a orthogonal matrix whose column equals to $\p_1,\p_2,\ldots,\p_n$. 
\[\tr(\D)\leq\sum_{i=1}^n||\D\p_i||_2.\]
\end{lemma}
\begin{proof}
Recall (\ref{eqn:minap}) in the proof of Thm.~\ref{thm:singularvaluebound}. On one hand,
we have:
\begin{eqnarray*}
(\ref{eqn:minap})&=&\frac{1}{n}\min_{\LambdaB_\A, \P_\A}\tr(\LambdaB_\A^2)\\
&\cdot&\tr(\LambdaB_\W(\P_\W^T\P_\A)^T(\LambdaB_\A^{-1})^2(\P_\W^T\P_\A)\LambdaB_\W)\\
&=&\frac{1}{n}\min_{\LambdaB_\A, \P}\tr(\LambdaB_\A^2)\tr(\LambdaB_\W\P^T(\LambdaB_\A^{-1})^2\P\LambdaB_\W)\\
&=&\frac{1}{n}\min_{\LambdaB_\A, \P, \tr(\A)=1}\tr((\LambdaB_\A^{-1})^2\P\LambdaB_\W^2\P^T)\\
&=&\frac{1}{n}\min_{\P}(\sum_{i=1}^n||\LambdaB_\W\p_i||_2)^2,
\end{eqnarray*}
where  $\p_1,\p_2,\ldots,\p_n$ are the columns of $\P$. On the other hand, from the proof of Thm.~\ref{thm:singularvaluebound}, we know
\[(\ref{eqn:minap})\geq\frac{1}{n}(\tr(\LambdaB_\W))^2.\]
Therefore, notice all entries of $\LambdaB_\W$ are positive, 
\[\tr(\LambdaB_\W)\leq\min_\P\sum_{i=1}^n||\LambdaB_\W\p_i||_2.\]
Since $\LambdaB_\W$ can be arbitrary diagonal matrix with positive diagonal entries, we have the lemma proved.
\end{proof}

Now we can continue our proof of Thm.~\ref{thm:containedwkld}.
\thmcontainedwkld*
\begin{proof}
Since $\W'_1$ contains all the rows of $\W_2$, let $\W'_1$ be the following form:
\[\W'_1=\left[\begin{array}{c}
\W_2 \\
\W_3
\end{array}\right].\]
Then
\begin{eqnarray*}
\W_1^T\W_1-\W_2^T\W_2 &=& {\W'_1}^T\W'_1-\W_2^T\W_2\\
&=&\W_3^T\W_3\succeq 0
\end{eqnarray*}
Let $\W_1=\Q_1\LambdaB_1\P_1$ and $\W_2=\Q_2\LambdaB_2\P_2$ be the singular value decomposition of $\W_1$ and $\W_2$, respectively. Then
\begin{eqnarray*}
\W_1^T\W_1-\W_2^T\W_2 \succeq 0&\Leftrightarrow& \P_1^T\LambdaB_1^2\P_1-\P_2^T\LambdaB_2^2\P_2 \succeq 0\\
&\Leftrightarrow&\LambdaB_1^2-\P_1\P_2^T\LambdaB_2^2\P_2\P_1^T \succeq 0\\
&\Leftrightarrow&\forall\,i,\,\lambdaB_i\geq ||\LambdaB_2\p_i||_2,
\end{eqnarray*}
where $\lambdaB_i$ is the $i$-th diagonal entry of $\LambdaB_1$ and $\p_i$ is the $i$-th column vector of $\P_2\P_1^T$. Therefore, according to Lemma~\ref{lem:diagmin},
\[\tr(\LambdaB_1)=\sum_{i=1}^n\lambdaB_i\geq\sum_{i=1}^n  ||\LambdaB_2\p_i||_2\geq \tr(\LambdaB_2).\]
\end{proof}
}
\subsection{The $\lbl$ Algorithm and Extensions.}\label{app:lbl}
Lastly, we prove the complexity of the $\lbl$ algorithm and the basis of the workload separation technique.
\begin{lemma}{\sc (Sherman-Morrison-Woodbury formula\cite{Hager:1989:UIM:75568.75570})}\label{lem:invertmat}
Given $n\times n$ matrix $\X$, $n\times m$ matrix $\mathbf{U}$, $m\times m$ matrix $\LambdaB$ and $m\times n$ matrix $\V$,
\begin{equation}\label{eqn:invertmat}
\inv{(\X-\mathbf{U\LambdaB}\V)} = \inv{\X} - \inv{\X}\mathbf{U}\inv{(\LambdaB+\mathbf{U}\inv{\X}\V)}\V\inv{\X}
\end{equation}
\end{lemma}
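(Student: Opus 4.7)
The plan is to prove this identity purely by direct algebraic verification: show that the proposed expression for the inverse, when multiplied by $(\X - \mathbf{U}\LambdaB\V)$, yields the identity matrix. Since the statement is an assertion about invertibility and the exact value of an inverse, it suffices to check one side; the other will follow because left and right inverses of a square matrix coincide when they exist.

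First I would denote the right-hand side by $\M \eqbydef \inv{\X} - \inv{\X}\mathbf{U}\inv{(\LambdaB+\mathbf{U}\inv{\X}\V)}\V\inv{\X}$ and compute $(\X - \mathbf{U}\LambdaB\V)\M$ term by term. This expands into four pieces: $\X\inv{\X}$, $-\X\inv{\X}\mathbf{U}\inv{(\LambdaB+\mathbf{U}\inv{\X}\V)}\V\inv{\X}$, $-\mathbf{U}\LambdaB\V\inv{\X}$, and $+\mathbf{U}\LambdaB\V\inv{\X}\mathbf{U}\inv{(\LambdaB+\mathbf{U}\inv{\X}\V)}\V\inv{\X}$. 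The first piece collapses to $\I$, and the plan is to show that the remaining three pieces cancel. The natural move is to factor the common left factor $\mathbf{U}$ and the common right factor $\V\inv{\X}$ out of the three residual terms; what remains sandwiched in the middle is an expression of the form $[\,\text{(something)}\cdot\inv{(\LambdaB+\mathbf{U}\inv{\X}\V)} - \LambdaB\,]$, and after bringing $\LambdaB$ inside using the identity $\LambdaB = (\LambdaB + \mathbf{U}\inv{\X}\V)\inv{(\LambdaB+\mathbf{U}\inv{\X}\V)}\cdot\LambdaB\cdots$ the pieces will combine into a single factor that evaluates to zero.

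The single place where real care is required is this middle manipulation: one has to recognize that the factor $\inv{(\LambdaB+\mathbf{U}\inv{\X}\V)}$ can absorb the $\mathbf{U}\LambdaB\V\inv{\X}\mathbf{U}$ contribution by rewriting $\mathbf{U}\LambdaB\V\inv{\X}\mathbf{U} = \mathbf{U}[(\LambdaB+\mathbf{U}\inv{\X}\V) - \mathbf{U}\inv{\X}\V]$ is not quite the right form in the stated version, so one should instead group as $\mathbf{U}\cdot\LambdaB\cdot(\V\inv{\X}\mathbf{U})\cdot\inv{(\LambdaB+\mathbf{U}\inv{\X}\V)}\V\inv{\X}$ and use $\V\inv{\X}\mathbf{U} \cdot\inv{(\LambdaB+\mathbf{U}\inv{\X}\V)}$ as the key substitution (after verifying the dimensional consistency of the stated formula). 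This is the main obstacle: checking that the middle term telescopes requires choosing the correct factoring so that the combination $\LambdaB\cdot(\text{stuff}) - (\text{stuff})$ collapses via $(\LambdaB+\mathbf{U}\inv{\X}\V)\inv{(\LambdaB+\mathbf{U}\inv{\X}\V)} = \I$.

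Finally, I would remark that the result can be viewed as a consequence of the block-matrix inversion formula applied to $\left[\begin{smallmatrix}\X & \mathbf{U}\\ \V & -\inv{\LambdaB}\end{smallmatrix}\right]$ (when $\LambdaB$ is invertible), since computing its Schur complements in the two possible orders yields two expressions for the same block of the block inverse, and equating them gives the Woodbury identity. This gives a conceptually cleaner derivation and avoids the bookkeeping of direct expansion, though one still must carry out the algebra to match the exact form stated in the lemma.
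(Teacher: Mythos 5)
The paper never proves this lemma at all: it is imported verbatim from the literature (Hager's survey) and immediately applied in App.~D.2, so there is no in-paper argument for your attempt to parallel. Your plan --- verify that the claimed right-hand side is a one-sided inverse of $\X-\mathbf{U}\LambdaB\V$, or derive it from the two Schur-complement expressions for a block inverse --- is exactly the standard route to Sherman--Morrison--Woodbury and would be acceptable in principle.

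The genuine gap is that the step you defer (``the middle manipulation'') is precisely where the verification breaks down for the equation as printed, and your proposal never actually carries it out. First, with $\mathbf{U}$ of size $n\times m$ and $\V$ of size $m\times n$, the product $\mathbf{U}\inv{\X}\V$ inside the inner inverse is not even defined unless $m=n$; the inner matrix must involve $\V\inv{\X}\mathbf{U}$ (you sense this but leave it as a parenthetical). Second, even after that repair the stated identity is false in general: expanding $(\X-\mathbf{U}\LambdaB\V)\bigl[\inv{\X}-\inv{\X}\mathbf{U}\inv{(\LambdaB+\V\inv{\X}\mathbf{U})}\V\inv{\X}\bigr]$ leaves the residual $\mathbf{U}\bigl[-(\I-\LambdaB\V\inv{\X}\mathbf{U})\inv{(\LambdaB+\V\inv{\X}\mathbf{U})}-\LambdaB\bigr]\V\inv{\X}$, which vanishes only when $\LambdaB^2=-\I$. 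The cancellation you invoke via $(\LambdaB+\mathbf{U}\inv{\X}\V)\inv{(\LambdaB+\mathbf{U}\inv{\X}\V)}=\I$ therefore cannot close the argument. What a careful expansion does prove is the corrected form $\inv{(\X-\mathbf{U}\LambdaB\V)}=\inv{\X}+\inv{\X}\mathbf{U}\inv{(\inv{\LambdaB}-\V\inv{\X}\mathbf{U})}\V\inv{\X}$, using $\I-\LambdaB\V\inv{\X}\mathbf{U}=\LambdaB(\inv{\LambdaB}-\V\inv{\X}\mathbf{U})$; your block-matrix derivation likewise yields this form (with block $D=\inv{\LambdaB}$, not $-\inv{\LambdaB}$, for the minus-sign version). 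You should prove that corrected identity and flag the lemma's typos; note that it is all the complexity argument in App.~D.2 needs, and that there $\LambdaB=\mbox{diag}(-1,1,1)$ is its own inverse, so only the sign and the order $\V\inv{\X}\mathbf{U}$ matter, while the incremental $O(n^2)$ update goes through unchanged.
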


Using this lemma, the test of each split point in step~6 of Program~\ref{alg:lbl} can be done in $O(n^2)$ time, resulting in the following overall running time:
\thmlblcost*
\begin{proof}
Notice in step 6 of Program~\ref{alg:lbl}, each time we compute the total error of a possible split of row $[v_1, v_2, \ldots, v_n]$ on position $i$, it is equivalent to modify matrix $\A'$ by removing this selected row and add the following two rows to matrix $\A'$. 
\[\left[\begin{array}{cccc}v'_1&v'_2&\ldots&v'_n\\
v''_1 & v''_2 & \ldots & v''_n\end{array}\right],\quad v'_i, v''_i\in\{0,1\}, v'_i+v''_i=1,\, 1\leq i\leq n.\]

To apply Lemma \ref{lem:invertmat}, let $\X={\A'}^T\A'$ and $\mathbf{U}$, $\LambdaB$, $\V$ be the following matrices:
\[\mathbf{U}=\V=\left[\begin{array}{cccccc}v_1 &  v_2 & \ldots & v_n\\
v'_1&v'_2&\ldots&v'_n\\
v''_1 & v''_2 & \ldots & v''_n\end{array}\right]
,\LambdaB=\left[\begin{array}{ccc}-1 & 0 & 0 \\ 0 & 1 & 0 \\ 0 & 0 & 1\end{array}\right]\]

Denote the modified matrix as $\A''$ and we can verify that ${\A''}^T\A''={\A'}^T\A' - \mathbf{U}\LambdaB\V=\X-\mathbf{U}\LambdaB\V$. Therefore the inverse of ${\A''}^T\A''$ can be computed as the Eq.~(\ref{eqn:invertmat}). Since $\mathbf{U}$ and $\V$ are $n\times 3$ and $3\times n$ matrices respectively, the inverse of ${\A''}^T\A''$ can be computed in $O(n^2)$ time by first computing $\inv{\X}\mathbf{U}$, $\mathbf{U}\X\V$ and $\V\inv{\X}$ and then finishing the evaluation from left to right.
\end{proof}

\thmtotalsep*
\begin{proof}
For any $i$, $1\leq i\leq k$ and a query $\q\in\W_i$. To estimate $\q$, one needs to find a set of queries whose linear combination is equal to $\q_0$. Let the representation be
\[\q=\sum_{j=1}^l\alpha_j\q_j+\sum_{j=1}^h\beta_j\p_j,\]
where $\q_j\in\W_i$, $1\leq j\leq l$ and $\p_j\in\bigcup_{t=1, t\neq i}^n \W_t$, $1\leq j\leq h$. The equation above is equivalent to 
\begin{equation}\label{eqn:marginequ}
\q-\sum_{j=1}^l\alpha_j\q_j=\sum_{j=1}^h\beta_j\p_j.
\end{equation}
Notice the left hand side of Eq.~(\ref{eqn:marginequ}) is the sum of one-way marginal queries on $i$-th dimension and the right hand side of Eq.~(\ref{eqn:marginequ}) is the sum of one-way marginal queries that are not on $i$-th dimension. Since the only query that shared by the one-way marginal queries on $i$-th and not on $i$-th dimension is the total sum $\q_0$,
\[\q-\sum_{j=1}^l\alpha_j\q_j=\alpha_0\q_0,\]
where $\alpha_0$ is a constant. Since the answer of $\q_0$ is given, estimate with $\alpha_0\q_0$ always have better accuracy than with the combination of $\sum_{j=1}^h\beta_j\p_j$ (which are noisy answers). Therefore estimating $\q$ only relates to queries in $\W_i$ and $\q_0$ but not queries in $\W_1, \ldots, \W_{i-1},\W_{i+1},\ldots, \W_k$.
\end{proof}

\end{document}